\theoremstyle{definition}
\newtheorem{definition}{Definition}[section]
\newtheorem{example}{Example}[section]
\newtheorem{proposition}{Proposition}[section]
\newtheorem{remark}{Remark}[section]
\newtheorem{theorem}{Theorem}[section]
\newtheorem{lemma}{Lemma}[section]
\newtheorem{corollary}{Corollary}[section]
\title{On the equivalence of holding cost and response time for evaluating performance of queues.}
\author{Dylan Solms\\
	Department of Decision Sciences\\
	University of South Africa\\
	South Africa \\
	\texttt{62652257@mylife.unisa.ac.za} \\
}
\begin{document}
\maketitle

\begin{abstract}
This self-contained discussion relates the long-run average holding cost per unit time to the long-run average response time per customer in a $G/G/1$ queue with no assumption made on the order of service. The only restriction established is that the system be ergodic. This is achieved using standard queuing theory. The practical relevance of such a result is discussed in the context of simulation output analysis as well as through an application to formulating a Markov Decision Process that minimises long-run average response time per customer. 
\end{abstract}

\keywords{queuing-theory \and limits \and modelling \and computational-experiment \and steady-state \and ergodic \and performance evaluation}
\tableofcontents

\section{Introduction}

Made-made systems often face the dilemma of provisioning finite resources to persistent demand. These systems are usually stable in the sense that the long-term rate of supply exceeds the long-term rate of demand. Despite this, the latter might exceed the former for durations due to the innate stochastic behaviour found in both processes. The result is that queues start to form; a phenomenon all too familiar to every-day life. Furthermore, an exigent issue arises as unattended demand starts to incur a cost.

Demand is customarily produced by discrete entities, referred to as customers. Resources are provisioned to these by a server. Being able to quantify the cost of unattended demand is non-trivial as it allows for the queuing-system in question to undergo performance evaluation \cite{harchol2013performance}, optimal-design \cite{stidham_optimal_queue_design} and optimal-control \cite{cassandras_book,BertsekasVol2}. Subsequently, quality of service (QoS) can be assessed and optimised for before and during operation.

In prescribing controls to a queuing system it is natural to select actions that \emph{minimise} the average long-term amount of time each customer spends waiting. \emph{Delay} $\mathcal{D}$ is the amount of time a customer spends in the buffer of a queue before departing to be served while \emph{response time}\footnote{Other terms include turnaround time, time ins system and sojourn time.} $\mathcal{R}$ is the sum of delay and service duration experienced by an entity \cite{harchol2013performance} i.e. total amount of time spent in the system. Hence, an appealing and intuitive objective function (OF) to minimise is the average long-run response time \emph{per customer} $\bar{\mathcal{R}}_n$. 

The literature pertaining to optimal control of queuing systems does not use this. The most popular OF has been \emph{holding cost} \cite{BertsekasVol2,duenyas1996heuristic,hofri1987optimal}. This is also referred to as \emph{work-in-progress} (wip) \cite{van_eekelen}. More specifically, the average long-run holding cost \emph{per unit time} $\bar{\mathcal{H}}_t$ has manifested as the OF of choice. Such a performance metric is arguably easier to use than $\bar{\mathcal{R}}_n$ as only the time-stamp $\tau \in \mathbb{R}_{\geq 0}$ of a queue arrival or queue departure to service has to be recorded in conjunction with the the queue length $n(\tau) \in \mathbb{N}_{0}$. Holdings cost has enjoyed much success in applications to Markov Decision Processes where it has allowed to solve for an optimal policy over queue lengths (see chapter 9 of \cite{cassandras_book}, chapter 1.4 of \cite{BertsekasVol2}, the papers \cite{suk1991optimal,moustafa1996optimal,fernandesscheduling,haijema_traffic_MDP} as well as the PhD thesis \cite{bhulai2002markov}). Variants of holding cost have also been developed such as Harrison's reward function \cite{harrison1975priority} which can be interpreted as the holding cost saved from not holding a job any longer. Such a variant is nontrivial as a reward function is required if one wishes to use a \emph{Gittin's Index} \cite{gittins1979bandit} as a policy for controlling the system as a Multi-Armed Bandit \cite{MAB_gittins_tutorial}. Such an approach has been successful in \emph{scheduling} from which the $\mu c$ rule has been derived (see example 1.4.2 of \cite{BertsekasVol2}). Harrison's reward function is also used in scheduling \emph{polling systems with switch-over durations} \cite{hofri1987optimal,duenyas1996heuristic}.

This paper finds $\bar{\mathcal{H}}_t$ not to be as interpretable as $\bar{\mathcal{R}}_n$, Moreover, $\bar{\mathcal{R}}_n$ would appear to be a far more intuitively appealing OF to minimise as it pertains to the individual customer whereas $\bar{\mathcal{H}}_t$ concerns the system as a whole.

The main result of this paper is that it establishes the performance measures to be related by the queue's \emph{mean arrival rate} $\lambda$ such that $\bar{\mathcal{H}}_t = \lambda \bar{\mathcal{R}}_n$.

\section{Background}\label{section:Background}

\subsection{Sample paths and observations}\label{section: sample paths}

The performance metrics are computed using data obtained from a queuing system $\theta$ in the form of trajectories which are sequences of the form $\mathcal{T}(\omega,\theta) =  \left\{(\tau,n(\tau)) \right\}_{\tau=T_i}^{{T_f}}$ where $T_i$ and $T_f$ are the initial and final time-stamps, respectively. An infinite amount of possible trajectories exist where $\omega \in \Omega$ denotes a single realisation. The mean or expected trajectory is obtained by integrating over $\Omega$. The trajectory can be obtained from observing a real-world scenario or from a computer-simulation. In the latter, $\omega$ can be interpreted as a specific random seed for the sequence of pseudo-random numbers to be used. An objective function $F$ is dependent on $\mathcal{T}(\omega)$ as well as a policy $\pi: \mathbb{N}_{0} \to \mathcal{A}$ where $\mathcal{A}$ is some set of actions e.g. controlled service rate. This dependence can be denoted as $F\left( \mathcal{T}(\omega,\theta),\pi \right)$. Throughout this paper, $\mathcal{T}(\omega,\theta)$ and $\pi$ will be omitted from notation in pursuit of clarity and convenience with. It will only be recalled when necessary. Most of the time, expressing only $\omega$ will be sufficient enough to express the fact that trajectories are being produced by the same model $\theta$ under fixed policy $\pi$. Lastly, varying $\theta$ and $\pi$ will produce the response surface of $F$ under $\omega$. 

\begin{definition}[\textbf{equivalent under optimisation}]\label{def: equivalent under optimisation}
If $F_i$ and $F_j$ are two response surfaces with critical points\footnote{Global/local maxima and minima as well as inflection points and constant regions.} at the same argument locations then $F_i$ and $F_j$ will be regarded as \emph{equivalent under optimisation}. 
\end{definition}

A practical interpretation of definition~\ref{def: equivalent under optimisation} is that if a deterministic gradient-based optimisation algorithm were to start at initial arguments $(\theta_0,\pi_0)$ on $F_i$ and produce a $K$-length search sequence $\mathcal{S} = \{(\theta_0,\pi_0),\cdots,(\theta_k,\pi_k),\cdots,(\theta_K^*,\pi_K^*) \}$ in order to reach a critical point\footnote{No assumption is made that this is a very desirable point or eve less a global extreme.} $(\theta_K^*,\pi_K^*)$ then the same sequence will be iterated by starting the algorithm at $(\theta_0,\mu_0)$ on $F_j$. In fact, if the algorithm were to start at any $(\theta_k,\pi_k) \in \mathcal{S}$ on $F_j$ then it would reach $(\theta_K^*,\pi_K^*)$ via $\mathcal{S}_{k:K}$.
\\
\begin{example}[\textbf{Equivalent under optimisation}]
The following two transforms produce $F_j$ from $F_i$ such that they are equivalent under optimisation. The \emph{additive constant} transform
\begin{equation}
    F_j(\alpha) = F_i(\alpha) + C \label{eq: additive constant}
\end{equation}
where $C\in \mathbb{R}$ and the \emph{constant positive product} transform
\begin{equation}
    F_j(\alpha) = F_i(\alpha)\times \phi \label{eq: constant product}
\end{equation}
where $\phi \in \mathbb{R}_{>0}$.
\end{example}

\begin{example}[\textbf{Not equivalent under optimisation}]
The following two transforms produce $F_j$ from $F_i$ such that they are \emph{not} equivalent under optimisation. Adding constants to the arguments
\begin{equation}
    F_j(\alpha) = F_i(\alpha+ C)
\end{equation}
and multiplying/scaling the arguments by a constant
\begin{equation}
    F_j(\alpha) = F_i(\alpha\times \phi).
\end{equation}
\end{example}
The constant product transform (\ref{eq: constant product}) will play an important role in the results of section~\ref{section: relationship}.

\subsection{Time averages, ensemble averages and ergodicity}

Two types of averages exist: time and ensemble averages. If a system is \emph{ergodic} then the two are equivalent \cite{harchol2013performance}. This paper will be interested in using time-averages.\\

\begin{definition}[\textbf{Time-average \cite{harchol2013performance}}]
For a time-dependent quantity $f(t,\omega)$ and sample path $\omega \in \Omega$, the time-average follows: 
\begin{equation}
    \bar{f}_{time} =  \lim_{t\to \infty } \left\{ \frac{1}{t} \int_{0}^t f(\tau,\omega) \, d\tau \right\}
\end{equation}
\end{definition}
 \begin{definition}[\textbf{Ensemble-average \cite{harchol2013performance}}] For a time-dependent continuous quantity $f(t) \in \mathbb{R}$, the ensemble-average follows
 \begin{eqnarray}
     \bar{f}_{ens}  & = & \lim_{t \to \infty} \mathbb{E}\left[f(t) \right] \\
     & = & \int_{0}^\infty z \times \lim_{t \to \infty}\left\{ P\left(f(t)=z \right)\right\} \, dz
 \end{eqnarray}
 wheres for a time-dependent discrete quantity $g(t) \in \mathbb{Z}$
 \begin{equation}
     \bar{g}_{ens}  =  \sum_{z=0}^{\infty} z \times \lim_{t \to \infty}\left\{ P\left(g(t)=z \right)\right\}.
 \end{equation}
 \end{definition}

\begin{definition}[\textbf{Ergodic \cite{harchol2013performance}}]\label{def: ergodic}
An ergodic system is \emph{irreducible}, \emph{positive recurrent} and \emph{aperiodic}. Irreducibility refers to the fact that a system can reach any state $j \in \mathcal{X}$ from any other state $i \in \mathcal{X}$ such that the initial state chosen is not of long-term importance. Positive recurrence ensures that each state is visited infinitely often in the limit such that the process stochastically restarts itself (renewal). Each renewed process counts as an independent sample hence the importance of this property in establishing equivalence between ensemble and time-averages. Aperiodicity refers to the system being independent of time-steps. This ensures that the ensemble average exists.
\end{definition}

\subsection{Total response time and holding cost}

\begin{definition}[\textbf{Total holding cost}]
The \emph{total} holding cost up to time $t \in \mathbb{R}_{\geq 0}$ from the start of the observation is given as
\begin{equation}
    \mathcal{H}(t) = c \int_{T_i}^t n(\tau) \, d\tau \label{eq: total holding cost}
\end{equation}
where $c \in \mathbb{R}_{>0}$ is a cost weight and $n(\tau)\in \mathbb{N}_0$ the queue length. The queue length consists of the length of the buffer and the customer in service.
\end{definition}
 Figure~\ref{fig: delay vs wip} shows the progression of holding cost over time where (\ref{eq: total holding cost}) is the area. The colors highlight the fact that different customers contribute to this performance metric over the lifetime of the observation. The magenta and cyan ticks on the $x$-axis denote arrival and departure instances/events, respectively. As the start of observations do not always correspond to the beginning of the system's lifetime $T_i>0$, an initial portion of the true/actual trajectory may be unobserved. Finite length observations also lead to later portions of the actual trajectory remaining unobserved. Unobserved quantities are accentuated by a hatch texture. Holding cost remains uncomplicated by unobserved portions of the trajectory as it pertains to the system as a whole and not to the individual customer.

\begin{figure}[ht]
    \centering
    \includegraphics[width=0.6\textwidth]{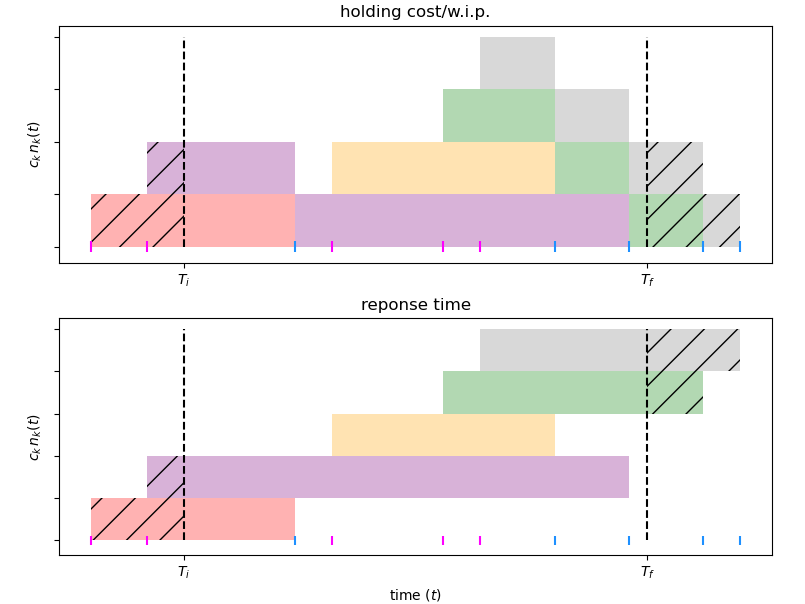}
    \caption{Holding cost and response time on the same sample path}
    \label{fig: delay vs wip}
\end{figure}

Response time is complicated by this distinction as it concerns the individual. Figure~\ref{fig: delay vs wip} shows that certain customers have a portion of their $\mathcal{R}$ unaccounted for by the observation. This motivates the need recognise two types of \emph{total} response time: observed $\mathcal{R}^{obs}(t)$ and actual $\mathcal{R}^{act}(t)$. Actual response time includes an unobserved amount $\mathcal{R}^{un}(t,\omega) = \mathcal{R}_i^{un}(\omega) + \mathcal{R}_f^{un}(t,\omega) $ that was truncated from customers already in the system by $T_i$ and those that remain in the system after $T_f$. As a result, the observed metric underestimates the actual amount $\mathcal{R}^{obs}(t) \leq \mathcal{R}^{act}$(t).

\begin{remark}
The unobserved response is sample dependent, hence the inclusion of $\omega$. In particular, $\mathcal{R}_i^{un}(\omega)$ is a constant determined by where $T_i$ is placed. If the ergodic system has run long enough before being observed such that it has entered a steady-state regime then $ \mathcal{R}_i^{un}$ will also be a steady state value. Moreover, it will be a time-average that is equivalent to an ensemble average due to the ergodicity assumption. In such a case, it will be independent of $\omega$.
\end{remark}

Let $\Delta_i = t_i^D - t_i^A \geq 0$ denote the response time experienced by the $i^{th}$ customer where $t_i^A$ is the arrival time into the buffer of the queue and $t_i^D$ the time of departure from the system. Furthermore, $A(t,\omega)$ and $D(t,\omega)$ are counting processes of \emph{observed} arrivals and departures such that $n(t) = A(t)- D(t) + n\left( T_i \right)$ whereas $\{A(t,\omega)\}$ and $\{D(t,\omega)\}$ denote sets that record \emph{all} customer arrivals and departures over the trajectory starting at $T_i$ and up to time $t>T_i$. Lastly, $\{A(T_i)\}$ contains \emph{unobserved} arrival information of the $n\left(T_i\right)$ existing customers. With this notation in place, total actual and observed versions can be defined.
\begin{definition}[\textbf{Total actual response time}]
\begin{equation}
    \mathcal{R}^{act}(t) = \sum_{i \in \{A(t)\}\cup \{A(T_i)\}}  \Delta_i \, c \label{eq: total actual delay}
\end{equation}
\end{definition}

\begin{definition}[\textbf{Total observed response time}]
\begin{eqnarray}
    \mathcal{R}^{obs}(t) & =& \sum_{j \in \{A(T_i)\cap\{D(t)\}\}}  (t_j^D - T_i)\, c + \sum_{j \in \{A(T_i)\setminus\{D(t)\}\}}  (t - T_i)\, c + \nonumber\\
    & & \sum_{j \in\{A(t)\}\cap\{D(t)\}} \Delta_j \, c + \sum_{j \in\{A(t)\}\setminus\{D(t)\}} (t - t_j^A) \, c \label{eq: total observed response} \\
    & = & \sum_{j \in \{A(t)\}\cup \{A(T_i)\}} \left( \min\left(t_j^D,t\right) - \max\left(t_j^A,T_i\right)  \right) \, c \label{eq: total observed delay}.
\end{eqnarray}
\end{definition}

Thus far neither $\mathcal{H}(t),\mathcal{R}^{act}(t)$ or $\mathcal{R}^{obs}(t)$ have made any assumptions regarding the service order of customers. This is illustrated in figure~\ref{fig: delay vs wip} where the yellow customers does not adhere to first-come-first-serve (FCFS) order. While FCFS is commonly used \cite{harchol2013performance,cassandras_book}, this paper will keep the discussion and results relevant to any service order in an \emph{open system} \cite{harchol2013performance}.

\subsection{Average long-run response time and holding cost}
Two types of long-run averages are of interest: \emph{per unit time} $\bar{X}_t$ and \emph{per customer} $\bar{X}_n$. These will also be referred to as \emph{time-averages} and \emph{count-averages}, respectively.
\\
\begin{definition}[\textbf{long-run average per unit time}]
\begin{eqnarray}
    \bar{X}_t & = & \lim_{T \to \infty}\left\{ \frac{1}{T} \int_{T_i}^{T_i+T} X(\tau) \, d\tau\right\} \label{eq: average per unit time}
\end{eqnarray}
\end{definition}
The total number of customers seen by the system is $N(t) = \left|\{A(t)\}\cup \{A(T_i)\}\right| = n(T_i) + A(t)$ for $t > T_i$.
\\
\begin{definition}[\textbf{long-run average per customer}]
\begin{eqnarray}
    \bar{X}_n & = &  \lim_{T \to \infty}\left\{ \frac{1}{N(T)} \int_{T_i}^{T_i+T} X(\tau) \, d\tau\right\} \label{eq: average per customer}
\end{eqnarray}
\end{definition}

Substituting $\mathcal{H}, \mathcal{R}_{obs}$ and $\mathcal{R}_{act}$ for $\int_{T_i}^{T_i+T} X(\tau) \, d\tau$ in (\ref{eq: average per unit time}) yields $\bar{\mathcal{H}}_t$, $\bar{\mathcal{R}}_t^{obs}$ and $\bar{\mathcal{R}}_t^{act}$ while (\ref{eq: average per customer}) grants $\bar{\mathcal{H}}_n, \bar{\mathcal{R}}_n^{obs}$ and $\bar{\mathcal{R}}_n^{act}$.

\subsection{Convergence of random variables}\label{section: random variable convergence}

If $\{X_n(\omega)\}_{n\in\mathbb{N}}$ denotes a sequence of random variables then convergence to $\mu_X$ will be established using \emph{almost-sure-convergence} and \emph{convergence-in-probability}.
\\
\begin{definition}[\textbf{Almost-sure-convergence \cite{harchol2013performance}}]\label{def: almost sure convergence}
A sequence of random variables $\{X_n(\omega)\}_{n\in\mathbb{N}}$ converges to $\mu_X$ if
\begin{equation}
    \forall \delta > 0: \quad P\left(\omega:\, \lim_{n \to \infty} \left\{\mid X_n(\omega) - \mu_X \mid\right\} > \delta  \right) = 0 \label{eq: almost sure convergence}
\end{equation}
which can be expressed as 
\begin{equation}
    X_n \xrightarrow{a.s.} \mu ,\quad \mbox{as } n \to \infty. \nonumber
\end{equation}
\end{definition}

\begin{definition}[\textbf{Convergence-in-probability \cite{harchol2013performance}}]\label{def: convergence in probability}
A sequence of random variables $\{X_n(\omega)\}_{n\in\mathbb{N}}$ converges to $\mu_X$ if
\begin{equation}
    \forall \delta > 0: \quad \lim_{n \to \infty} \left\{P\left(\omega:\,  \mid X_n(\omega) - \mu_X \mid > \delta  \right)\right\} = 0 
\end{equation}
which can be expressed as 
\begin{equation}
    X_n \xrightarrow{P} \mu ,\quad \mbox{as } n \to \infty. \nonumber
\end{equation}
\end{definition}
These two types of convergence are illustrated in figure~\ref{fig:convergence}.
\\
\begin{figure}[ht]
    \centering
    \includegraphics[width=0.5\textwidth]{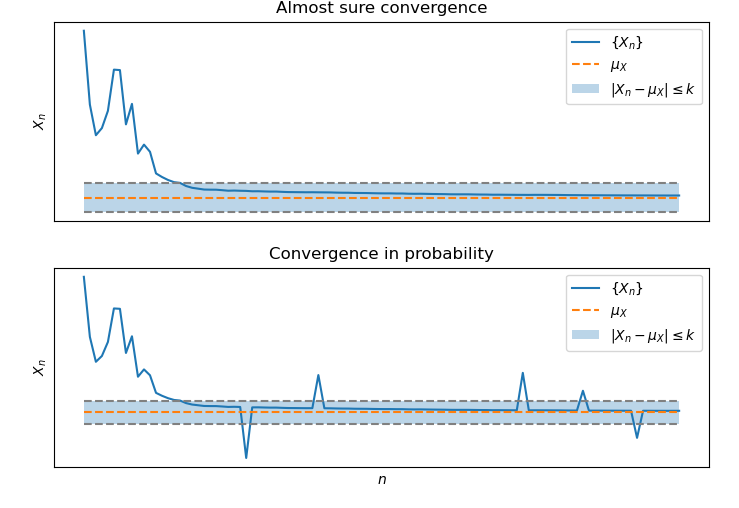}
    \caption{Random variable convergence}
    \label{fig:convergence}
\end{figure}
\\
Almost-sure-convergence is more strict in the sense that once it approaches $\mu_X$ then it continues to do so. The sequence is said to persist in behaving well. Probability in convergence allows for a limited amount of deviation or bad behaviour as long as the probability of observing such transgression is less than or equal to $\delta$. 

If instead $\{X_n(\omega)\}_{n\in\mathbb{N}}$ is a sequence of i.i.d. random variables that permit a sequence of interest $\{Y_n(\omega)\}_{n\in\mathbb{N}} = \left\{n^{-1}\sum_{i=1}^n X_i(\omega)\right\}_{n\in\mathbb{N}}$ then the above definitions can be used to guarantee convergence to $\mu_Y = \mathbb{E}[Y]$. Using almost-sure-convergence results in the \emph{Strong Law of Large Number} (SLLN) while convergence-in-probability translates to the \emph{Weak Law of Large Number} (WLLN) \cite{harchol2013performance}.

\subsection{Renewal-Reward Theory}\label{section: renewal-reward}

In definition~\ref{def: ergodic} it was established that ergodic processes restart themselves such that these renewed process can be considered as i.i.d. random variables. The points at which they restart are called \emph{renewal points}. A $G/G/1$ queue\footnote{In Kendall's notation, the first $G$ denotes generally distributed positive inter-arrival durations from a single source while the second refers to service durations. The one refers to a single server} has renewal points whenever an arrival enters an empty queue with idling server. A \emph{renewal cycle} is the interval between two such consecutive points $\mathscr{R}_i = \left[t_{i}^{\mathscr{R}},t_{i+1}^{\mathscr{R}}\right)$ where $t_{i}^{\mathscr{R}} \in \mathbb{R}_{\geq0}$. For the $G/G/1$ queue this would consist of a \emph{busy period} $B_i = \left[\underline{B}_i,\overline{B}_i\right)$ plus its neighbouring \emph{idle period} $I_i = \left[\underline{I}_i,\overline{I}_i\right)$ where $\overline{B}_i = \underline{I}_i$ such that $\mathscr{R}_i = \left[\underline{B}_i,\overline{I}_i\right)$. This is illustrated by figure~\ref{fig:renewals}.

\begin{figure}[ht]
    \centering
    \includegraphics[width=0.6\textwidth]{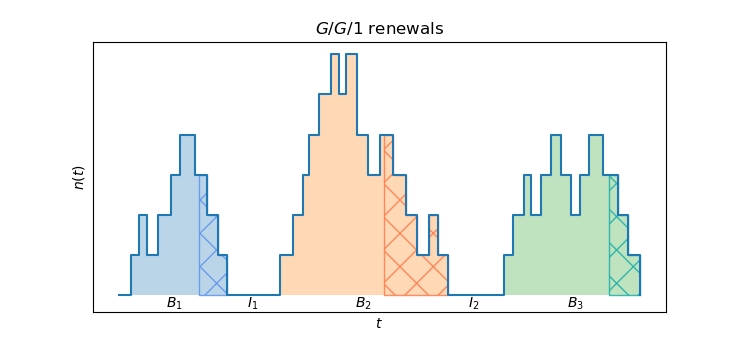}
    \caption{$G/G/1$ queue renewals consist of busy $B_i$ and idle $I_i$ periods}
    \label{fig:renewals}
\end{figure}

Renewal-reward theory is mainly used in obtaining time-averages from convenient ensemble averages.

\begin{definition}[\textbf{Renewal-reward time-average \cite{harchol2013performance}}]
If $F_R(t)$ denotes some \emph{cumulative}\footnote{In this context cumulative refers to successive additions of positive or negative values.} reward function $F:\mathbb{R}_{\geq 0}\to \mathbb{R}$ then its time-average can be determined as 
\begin{equation}
    \lim_{t\to\infty}\left\{\frac{F_R(t)}{t}\right\} = \frac{\mathbb{E}_{\mathscr{R}}\left[F_R\right]}{\mathbb{E}\left[\mathscr{R}\right]} \label{eq: time-average renewal}
\end{equation}
where $\mathbb{E}_{\mathscr{R}}\left[F_R\right]$ denotes the expected total reward earned over a renewal interval and $\mathbb{E}\left[\mathscr{R}\right]$ is the expected length of such an interval.
\end{definition}

Similarly, renewal-reward theory can also compute count-averages.

\begin{definition}[\textbf{Renewal-reward count-average}\cite{cassandras_book}]
If $N(t)$ denotes some cumulative count function $N:\mathbb{R}_{\geq 0} \to \mathbb{N}_0$ then 
\begin{equation}
    \lim_{t\to\infty}\left\{\frac{F_R(t)}{N(t)}\right\} = \frac{\mathbb{E}_{\mathscr{R}}\left[F_R\right]}{\mathbb{E}_{\mathscr{R}}\left[N\right]} \label{eq: count-average renewal}
\end{equation}
where $\mathbb{E}_{\mathscr{R}}\left[N\right]$ denotes the expected count per renewal interval.
\end{definition}

\subsection{Inspection Paradox}\label{section: inspection paradox}

The departure process $D(t)$ has been presented to be a function of time. This has been done for notational simplicity. It is in fact a function of the arrival process $A(t)$, delay/waiting time $\mathcal{D}_j(\sigma)$, service duration $t_j^{\mu}$ and service order $\sigma$. As a result, the departure time of the $j^{th}$ customer is also dependent on these factors
\begin{equation}
    t_j^D = t_j^A + \mathcal{D}_j(\sigma) + t_\mu.
\end{equation}
These are all random variables. As such, the response of the $j^{th}$ customer is also a random variable
\begin{eqnarray}
    \Delta_j & = & t_j^D - t_j^A\nonumber\\
    & = & \mathcal{D}_i(\sigma) + t_j^\mu
\end{eqnarray}
where $\sigma$ has been kept as general as possible. The service order might be to select a random customer from the queue. 
\\
\begin{example}
In the case of FCFS, $W_j(\sigma)$ evolves recursively according to the \emph{Lindley equation} \cite{cassandras_book}
\begin{eqnarray}
    \mathcal{D}_j\left(\mbox{FCFS}\right) & = & \max\left\{0,\mathcal{D}_{j-1}\left(\mbox{FCFS}\right) + t_j^{\mu} + t_j^A - t_{j-1}^A    \right\}
\end{eqnarray}
which allows subsequent values of interest such as $t_j^D$ and $\Delta_j$ to be computed as well.
\end{example}

At the termination time of the simulation $T_f$, a non-empty queue $n\geq 1$ will have a customer in service and $(n-1)$ in the buffer. The latter are all experiencing delay and are in the $\mathcal{D}_k(\sigma)$ process. Upon inspection, it is intuitive to expect $t_j^\mu(T_f)$ to be distributed according to the service time distribution $F_{\mu}$ (with probability density function $f_{\mu}$). Such false intuition leads to the \emph{inspection paradox}.
\begin{definition}[\textbf{Inspection paradox \cite{harchol2013performance}}]
The \emph{observed} service duration $t_j^\phi$ of a customer at a time where at least one service has already been completed $t_1^{\mu}$ is not the same as the service duration of the $j^{th}$ customer. More specifically, $t_j^\phi$ is \emph{stochastically larger} that $t_j^\mu$ such that
\begin{eqnarray}
    P\left(t_j^\phi > t\right) & \geq & P\left(t_j^\mu > t\right), \quad \forall t > 0\\
    \therefore  1 - F_\phi(t) & \geq & 1 - F_\mu(t), \quad \forall t > 0.
\end{eqnarray}
\end{definition}
It is understood that $t_j^\phi$ is \emph{positively biased} by some value $\beta$ because the likelihood of observing this interval is proportional to its length i.e. larger lengths leads to larger observation likelihoods. In contrast, $t_j^\mu$ has no such dependencies. Knowledge of this issue, allows one to properly establish the expected \emph{age} of the interrupted service $\mathbb{E}\left[\tau_j^\phi  \right] \neq \mathbb{E}\left[\tau_j^\mu  \right]$ as well as its expected \emph{residual lifetime} $\mathbb{E}\left[r_j^\phi  \right]$. These expected values can be derived using \emph{renewal-reward theory} (see chapter 23.4 of \cite{harchol2013performance})
\begin{eqnarray}
     \mathbb{E}\left[ \tau^\phi \right] & = & \frac{\mathbb{E}\left[(t^\mu)^2\right]}{2\mathbb{E}\left[t^\mu  \right]}\label{eq: age inspection paradox}\\
     & = & \mathbb{E}\left[ r^\phi \right] \label{eq: residual lifetime inspection paradox}
\end{eqnarray}
which further allows $\beta$ to be gauged
\begin{eqnarray}
    \beta & = & \mathbb{E}\left[ \tau^\phi \right] + \mathbb{E}\left[ r^\phi \right] - \mathbb{E}\left[ t^\mu \right]\\
    & = & \frac{\mathbb{E}\left[ (t^\mu)^2 \right] - \left(\mathbb{E}\left[ t^\mu \right]\right)^2}{\mathbb{E}\left[ t^\mu \right]}\nonumber \\
    & = & \frac{\mbox{Var}\left(t^\mu\right)}{\mathbb{E}\left[ t^\mu \right]}\\
    & = & C_V^2\left(t^\mu  \right)\mathbb{E}\left[ t^\mu \right]
\end{eqnarray}
where $C_V^2$ is the coefficient of variance. Variation is what increases bias in the service duration. Such a result is of value as it provides insight as to why queues form even though the arrival rates $\lambda = 1/\mathbb{E}[t^\lambda]$ are less than the service rates $\mu = 1/\mathbb{E}[t^\mu]$. Due to variance, service intervals occur in short and long variants with the long variant being greater than $\mathbb{E}[t^\mu]$. An arriving customer is much more likely to observe a long interval if the server is busy. More long intervals translates to greater waiting time and more queue formation. As the system is ergodic and the mean service rate is greater than the mean arrival rate then short intervals should empty the queue sometime in the limit/long-run.
\\
\newpage
\section{The relationship between long-run average holding cost per time and long-run average response time per customer}\label{section: relationship}

The goal of this section is to establish a relationship between $\bar{\mathcal{H}}_t$ and $\bar{\mathcal{R}}_n$ in order to assess whether they are equivalent under optimisation. This would be of great value as it would mean that using the more convenient $\bar{\mathcal{H}}_t$ for optimisation purposes would yield arguments $\alpha^* = \mbox{argmin}\{\bar{\mathcal{H}}_t \}$ that also minimise the more intuitive and arguably desirable $\bar{\mathcal{R}}_n $. This would translate to optimising for the overall system also optimising for the individual. Furthermore, this section also seeks to expose when distinguishing between $\mathcal{R}^{act}$ and $\mathcal{R}^{obs}$ is of importance.

\subsection{Total transient cost}

The transient cost refers to the value of the objective function accumulated up to some finite time $T_f$. At this time, an ergodic system has not necessarily entered steady-state behaviour \cite{harchol2013performance}.
\\
\begin{proposition}\label{prop: obs delay equal holding cost}
For $t > T_i \geq 0$ the observed holding cost and observed response times are equal
\begin{equation}
    \mathcal{R}^{obs}(t) = \mathcal{H}(t).
\end{equation}
\end{proposition}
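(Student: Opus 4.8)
The plan is to express both quantities as an integral/sum over (customer, time) pairs and show they count the same area. The key step is a pointwise identity for the queue length: for almost every $\tau \in [T_i, t]$,
\[
    n(\tau) = \sum_{j \in \{A(t)\}\cup\{A(T_i)\}} \mathbbm{1}\big[\,\max(t_j^A, T_i) \le \tau < \min(t_j^D, t)\,\big].
\]
Indeed, $n(\tau)$ is by definition the number of customers physically present in the system at time $\tau$ (buffer plus the one in service), i.e.\ those $j$ with $t_j^A \le \tau < t_j^D$; any such customer with $\tau \in [T_i,t]$ is observed, hence lies in $\{A(t)\}\cup\{A(T_i)\}$. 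Since $\tau \ge T_i$, the condition $t_j^A \le \tau$ is equivalent to $\max(t_j^A, T_i) \le \tau$, and since $\tau \le t$, the condition $\tau < t_j^D$ is equivalent to $\tau < \min(t_j^D, t)$ (the two disagree only at the single point $\tau = t$). Customers not present at $\tau$ contribute a zero indicator. The finitely many arrival and departure instants in $[T_i,t]$ form a null set and can be ignored under integration.

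First I would substitute this identity into the definition (\ref{eq: total holding cost}) of $\mathcal{H}(t)$. Because only finitely many customers have arrived by time $t$, the sum is finite and may be interchanged with the integral with no measure-theoretic difficulty (Tonelli also applies, all integrands being nonnegative), giving
\[
    \mathcal{H}(t) = c \sum_{j \in \{A(t)\}\cup\{A(T_i)\}} \int_{T_i}^{t} \mathbbm{1}\big[\,\max(t_j^A, T_i) \le \tau < \min(t_j^D, t)\,\big]\, d\tau .
\]
Next I would evaluate each inner integral. For every observed $j$ one checks $T_i \le \max(t_j^A, T_i) \le \min(t_j^D, t) \le t$: if $j \in \{A(T_i)\}$ then $t_j^A \le T_i < t_j^D$; if $j \in \{A(t)\}$ then $T_i < t_j^A \le t$ and $t_j^D \ge t_j^A$ since $\Delta_j \ge 0$. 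In either case the clipped set is a subinterval of $[T_i,t]$ of nonnegative length, so the integral equals $\min(t_j^D, t) - \max(t_j^A, T_i)$. Summing over $j$ and recalling that the same weight $c$ appears on both sides recovers exactly the compact form (\ref{eq: total observed delay}) of $\mathcal{R}^{obs}(t)$, which proves the claim.

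The only real obstacle is the bookkeeping in the queue-length identity: one must verify that clipping arrival times up to $T_i$ and departure times down to $t$ neither drops a customer present during $[T_i,t]$ nor adds one who is absent, and that this matches the case split already written in (\ref{eq: total observed response}) --- the four sums there are precisely the four combinations of $\{$arrived before/after $T_i\}\times\{$departed before/after $t\}$. Everything else is a routine exchange of a finite sum with an integral and evaluation of indicator integrals; notably, no assumption on the service order $\sigma$ is used anywhere, consistent with the remark preceding the proposition.
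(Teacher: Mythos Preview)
Your proof is correct and follows essentially the same approach as the paper: both decompose $n(\tau)$ as a sum over customers present at time $\tau$, interchange the (finite) sum with the integral, and evaluate each customer's contribution as the length of the clipped interval $[\max(t_j^A,T_i),\min(t_j^D,t))$. The only cosmetic difference is that the paper passes through the explicit four-case split of~(\ref{eq: total observed response}) before collapsing to~(\ref{eq: total observed delay}), whereas you go directly to the compact $\min/\max$ form and then remark that the four sums are the four $\{\text{arrived before/after }T_i\}\times\{\text{departed before/after }t\}$ combinations.
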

\begin{proof}
This might be immediately obvious from figure~\ref{fig: delay vs wip} where the total areas of the two metrics clearly match. To show this to be mathematically true, the definition of total holding cost (\ref{eq: total holding cost}) will suffice.
\begin{eqnarray}
    \mathcal{H}(t) & = & c\int_{T_i}^{t} n(\tau)\, d\tau \nonumber\\
    & = & c \int_{T_i}^{t}n(T_i) +  A(\tau) - D(\tau) \, d\tau \nonumber \\
    & = &c\int_{T_i}^{t} \left\{A(T_i)\right\}\setminus\left\{ D(\tau)\right\} +  \left\{A(\tau)\right\}\setminus\left\{ D(\tau)\right\} \, d\tau \nonumber \\
    & = & \sum_{j \in \{A(T_i)\cap\{D(t)\}\}}\left(  \int_{T_i}^{t_j^D} c \, d\tau \right) + \sum_{j \in \{A(T_i)\setminus\{D(t)\}\}}  \left(\int_{T_i}^{t} c\, d\tau \right)+ \nonumber\\
    & & \sum_{j \in\{A(t)\}\cap\{D(t)\}} \left(\int_{t_j^A}^{t_j^D} c\, d\tau \right) + \sum_{j \in\{A(t)\}\setminus\{D(t)\}} \left(\int_{t_j^A}^{t} c\, d\tau \right) \nonumber\\
    & = & \sum_{j \in \{A(T_i)\cap\{D(t)\}\}}  (t_j^D - T_i)\, c + \sum_{j \in \{A(T_i)\setminus\{D(t)\}\}}  (t - T_i)\, c + \nonumber\\
    & & \sum_{j \in\{A(t)\}\cap\{D(t)\}} \Delta_j \, c + \sum_{j \in\{A(t)\}\setminus\{D(t)\}} (t - T_j^A) \, c  \nonumber \\
    & = & \sum_{i \in \{A(t)\}\cup \{A(T_i)\}} \left( \min\left(t_i^D,t\right) - \max\left(t_i^A,T_i\right)  \right) \, c  \label{eq: observed delay sequence}\\
    & = & \mathcal{R}^{obs}(t) \nonumber
\end{eqnarray}
\end{proof}

\begin{proposition}
For $t > T_i \geq 0$, the actual total response time might exceed the total holding cost
\begin{equation}
    \mathcal{R}^{act}(t) \geq \mathcal{H}(t).
\end{equation}
\end{proposition}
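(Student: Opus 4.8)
The plan is to compare $\mathcal{R}^{act}(t)$ and $\mathcal{R}^{obs}(t)$ termwise and then invoke Proposition~\ref{prop: obs delay equal holding cost}. By definition~(\ref{eq: total actual delay}) and the rewritten form~(\ref{eq: total observed delay}), both sums run over exactly the same index set $\{A(t)\}\cup\{A(T_i)\}$, so the comparison reduces to checking, customer by customer, that the actual contribution $\Delta_i\,c = (t_i^D - t_i^A)\,c$ dominates the observed contribution $\bigl(\min(t_i^D,t) - \max(t_i^A,T_i)\bigr)\,c$. Since $c>0$, it suffices to show $t_i^D - t_i^A \geq \min(t_i^D,t) - \max(t_i^A,T_i)$ for every customer counted.

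The key step is the elementary inequality $\min(t_i^D,t) \leq t_i^D$ together with $\max(t_i^A,T_i) \geq t_i^A$, which gives
\begin{equation}
    \min(t_i^D,t) - \max(t_i^A,T_i) \leq t_i^D - t_i^A = \Delta_i \nonumber
\end{equation}
for each $i$. Multiplying by $c>0$ and summing over the common index set yields $\mathcal{R}^{obs}(t) \leq \mathcal{R}^{act}(t)$. One should also note the summands are all nonnegative (for a counted customer $t_i^A \leq \min(t_i^D,t)$ since the customer has either departed by $t$ or is still present, and $\max(t_i^A,T_i) \leq \min(t_i^D,t)$), so the inequality is between well-defined nonnegative quantities and no cancellation issues arise. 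Finally, Proposition~\ref{prop: obs delay equal holding cost} identifies $\mathcal{R}^{obs}(t) = \mathcal{H}(t)$, giving $\mathcal{R}^{act}(t) \geq \mathcal{H}(t)$ as claimed.

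I would also briefly interpret the slack to make the statement concrete: the gap $\mathcal{R}^{act}(t) - \mathcal{R}^{obs}(t) = \mathcal{R}^{un}(t,\omega)\, $ is precisely the truncated portion $\mathcal{R}_i^{un}(\omega) + \mathcal{R}_f^{un}(t,\omega)$ discussed before the definitions — the $(t_i^A - T_i)^+$ pieces for customers present at $T_i$ and the $(t_i^D - t)^+$ pieces for customers still present at $t$ — and equality $\mathcal{R}^{act}(t) = \mathcal{H}(t)$ holds exactly when no customer straddles either endpoint, e.g.\ when $n(T_i)=0$ and $n(t)=0$. There is no real obstacle here; the only thing to be careful about is confirming that the index sets in~(\ref{eq: total actual delay}) and~(\ref{eq: total observed delay}) genuinely coincide (they do, both being $\{A(t)\}\cup\{A(T_i)\}$) so that the termwise comparison is legitimate rather than needing a matching argument.
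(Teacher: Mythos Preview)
Your proof is correct and takes essentially the same route as the paper: a termwise comparison of $\mathcal{R}^{act}(t)$ and $\mathcal{R}^{obs}(t)$ over the common index set $\{A(t)\}\cup\{A(T_i)\}$, followed by Proposition~\ref{prop: obs delay equal holding cost}; the paper simply writes the slack explicitly as $\mathcal{R}^{un}(t)=\sum_j\bigl[(t_j^D-t)\mathbbm{1}_{\{t_j^D>t\}}+(T_i-t_j^A)\mathbbm{1}_{\{T_i>t_j^A\}}\bigr]\geq 0$, which is your min/max inequality in indicator form. One small slip in your closing interpretation: the initial truncated piece is $(T_i-t_i^A)^{+}$, not $(t_i^A-T_i)^{+}$.
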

\begin{proof}
\begin{eqnarray}
    \mathcal{R}^{act}(t) & = & \sum_{j \in \{A(t)\}\cup \{A(T_i)\}}  \Delta_j \, c \nonumber \\
    & = & \sum_{j \in \{A(t)\}\cup \{A(T_i)\}} \left(t_j^D - t_j^A \right) \, c \nonumber \\
    & = & \sum_{j \in \{A(t)\}\cup \{A(T_j)\}} \left( \min\left(t_j^D,t\right) - \max\left(t_j^A,T_i\right)  \right) \, c + \nonumber\\
    & & \sum_{j \in \{A(t)\}\cup \{A(T_j)\}} \left(t_j^D-t\right)\mathbbm{1}_{\{t_j^D>t\}} + \left(T_i-t_j^A\right)\mathbbm{1}_{\{T_i > t_j^A\}} \label{eq: actual delay sequence} \\
    & = & \mathcal{R}^{obs}(t) + \mathcal{R}^{un}(t)\nonumber\\
    & = & \mathcal{H}(t) + \mathcal{R}^{un}(t)\nonumber \hspace{5.1cm} [\mbox{Proposition}  \ref{prop: obs delay equal holding cost}] \\
    & \geq & \mathcal{H}(t) \nonumber
\end{eqnarray}
\end{proof}

\subsection{Total long-run cost}

The notion of a long-run cost $F$ requires limits to be taken up to infinity. It is natural to expect that $F = \lim_{t\to\infty}\{F(t)\} = \infty$ which would not seem to be useful. In practice, a long-run total cost is measured over a large finite time horizon $T^*$; large enough for an ergodic system to be in it stationary regime. Such a cost is finite. This section seeks to assess whether different total costs \emph{converge} in the limit such that $F_i(T^*) \approx F_j(T^*)$.

In what follows, the system is assumed to be ergodic. At this point it is important to notice that $\mathcal{R}^{un}(t)$ consists of a constant component $\mathcal{R}_i^{un} = \sum_{j \in \{A(t)\}\cup \{A(T_j)\}}\left(T_i-t_j^A\right)\mathbbm{1}_{\{T_i > t_j^A\}}$ and a time-dependent component $\mathcal{R}_i^{un}(t) = \sum_{j \in \{A(t)\}\cup \{A(T_j)\}} \left(t_j^D-t\right)\mathbbm{1}_{\{t_j^D>t\}}$. The constant component is determined by $\omega$ and where the observation starts $T_i$. The time-dependent component will change value but might be best approximated by its steady-state time-average.

The quantities of interest are all random variables. Their stochastic nature stems from the fact that $n(t), t_j^A$ and $t_j^D$ are all random variables. However, these quantities would also appear to be continuous functions of time. As to avoid any confusion, the functions of interest will be expressed as the sums of a sequence of random variables: holding cost and observed response by equation (\ref{eq: observed delay sequence}) and actual response by equation (\ref{eq: actual delay sequence}). The appeal of this approach is that the machinery of section~\ref{section: random variable convergence} can be used.

As $\mathcal{H}(t)= \mathcal{R}^{obs}(t)$, proving that $\mathcal{H} = \lim_{\to\infty}\left\{ \mathcal{H}(t)\right\} = \lim_{\to\infty}\left\{ \mathcal{R}^{obs}(t)\right\} = \mathcal{R}^{obs}$ is trivial. This paper will only show that long-term observed response time $\mathcal{R}^{obs}$ \emph{does not} converge to the long-term actual response time $\mathcal{R}^{act}$. Before doing so, proposition~\ref{prop: time-average unobserved response} is required which pertains to the aforementioned time-dependent component of unseen response time $\mathcal{R}^{un}_{f}(t)$.

\begin{proposition}\label{prop: time-average unobserved response}
If an ergodic system has reached its stationary regime then $\mathcal{R}^{un}_{f}(t)$, observed at some randomly selected $t$, should be represented by its steady-state value which can be computed using renewal-reward theory of section~\ref{section: renewal-reward} 
\begin{eqnarray}
    \lim_{t \to \infty} \left\{ \mathcal{R}_f^{un}(t,\omega) \right\} & = & \mathbb{E}\left[\mathcal{R}_f^{un} \right]\label{eq:steady state unobserved response}.
\end{eqnarray}
\end{proposition}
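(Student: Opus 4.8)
The plan is to recognise $\mathcal{R}_f^{un}(t,\omega)$ as a \emph{regenerative process} driven by the renewal cycles $\mathscr{R}_i$ of the $G/G/1$ queue and then to read its steady-state value off the renewal-reward machinery of section~\ref{section: renewal-reward}. First I would observe that $\mathcal{R}_f^{un}(t) = \sum_{j}(t_j^D - t)\mathbbm{1}_{\{t_j^D > t\}}$ is exactly the sum of the \emph{residual sojourn times} of the $n(t)$ customers present at time $t$; it is therefore a genuine function of the current queue state together with the future arrival and service draws, and it is bounded on every finite interval. Second, I would invoke the renewal structure of section~\ref{section: renewal-reward}: at each renewal point $t_i^{\mathscr{R}}$ an arrival meets an empty, idle server (see figure~\ref{fig:renewals}), so the evolution of the queue — hence of $\mathcal{R}_f^{un}(\cdot)$ — on $[t_i^{\mathscr{R}},t_{i+1}^{\mathscr{R}})$ is a stochastic copy of its evolution on any other cycle and is independent of the past. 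This is precisely the positive-recurrence/renewal property of Definition~\ref{def: ergodic}, so $\{\mathcal{R}_f^{un}(t)\}$ is regenerative with i.i.d. cycles, and the dependence on where the observation began, $T_i$, disappears in the limit by irreducibility.

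Third, I would apply renewal-reward to the \emph{cumulative} reward $F_R(t) = \int_{T_i}^{t}\mathcal{R}_f^{un}(\tau,\omega)\,d\tau$. Equation (\ref{eq: time-average renewal}) then yields, almost surely,
\[
\lim_{t\to\infty}\frac{1}{t}\int_{T_i}^{t}\mathcal{R}_f^{un}(\tau,\omega)\,d\tau \;=\; \frac{\mathbb{E}_{\mathscr{R}}\!\left[\int_{\mathscr{R}}\mathcal{R}_f^{un}\right]}{\mathbb{E}\!\left[\mathscr{R}\right]},
\]
with $\mathbb{E}[\mathscr{R}]<\infty$ by positive recurrence. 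Because the system is ergodic, this time-average coincides with the ensemble average of the same quantity, i.e.\ with the steady-state expectation $\mathbb{E}[\mathcal{R}_f^{un}]$ of the snapshot $\mathcal{R}_f^{un}(t)$ taken at a randomly chosen $t$ in the stationary regime; this is exactly the content of (\ref{eq:steady state unobserved response}). Aperiodicity is what guarantees that such a stationary snapshot distribution exists at all, so that the phrase ``observed at some randomly selected $t$'' is meaningful.

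I expect the main obstacle to be integrability rather than the limiting argument itself: one must verify that $\mathbb{E}_{\mathscr{R}}\!\left[\int_{\mathscr{R}}\mathcal{R}_f^{un}\right]$ is finite. Over a single cycle each residual sojourn time is at most the length of the busy period, so $\int_{\mathscr{R}}\mathcal{R}_f^{un}(\tau)\,d\tau \le |\mathscr{R}|\int_{\mathscr{R}} n(\tau)\,d\tau$, and finiteness of the right-hand expectation is a second-moment-type condition on the busy period — equivalently on the service-time distribution. In the present setting, that of an ergodic $G/G/1$ queue whose service distribution has finite variance (the same assumption under which the inspection-paradox quantities $\mathbb{E}[\tau^\phi]$, $\mathbb{E}[r^\phi]$ of section~\ref{section: inspection paradox} are finite), this holds, so the steady-state value is finite and the representation (\ref{eq:steady state unobserved response}) is legitimate. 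A secondary point I would make explicit is that (\ref{eq:steady state unobserved response}) is to be read in the time-average/distributional sense: the path $t\mapsto\mathcal{R}_f^{un}(t,\omega)$ keeps fluctuating and does not converge pointwise, but in the stationary regime it is represented by the constant $\mathbb{E}[\mathcal{R}_f^{un}]$ — which is all that is needed in the subsequent convergence arguments for $\mathcal{R}^{obs}$ and $\mathcal{R}^{act}$.
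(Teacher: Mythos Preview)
Your argument is correct, but the route differs from the paper's. The paper does not give a formal proof; instead it argues by \emph{sampling once per renewal cycle} and invoking the count-average renewal formula~(\ref{eq: count-average renewal}) with $\mathbb{E}_{\mathscr{R}}[N]=1$, and then goes further to compute $\mathbb{E}[\mathcal{R}_f^{un}]$ \emph{explicitly}: it decomposes the snapshot into (i) the residual service of the customer in service, handled via the inspection paradox as $\rho\,\mathbb{E}[r^\phi]$, and (ii) the full service durations of the $n(t)-1$ customers still in the buffer, giving the closed form $\rho\,\bar{t}^\mu(C_V^2(t^\mu)+1)/2 + (\bar{n}-1)\bar{t}^\mu$. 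You instead apply the time-average formula~(\ref{eq: time-average renewal}) to the cumulative reward $\int_{T_i}^{t}\mathcal{R}_f^{un}(\tau)\,d\tau$ and stop at the abstract ratio $\mathbb{E}_{\mathscr{R}}\!\left[\int_{\mathscr{R}}\mathcal{R}_f^{un}\right]/\mathbb{E}[\mathscr{R}]$. Your approach is tidier on the probabilistic side --- you flag the integrability hypothesis and, importantly, you are explicit that~(\ref{eq:steady state unobserved response}) is a time-average/distributional statement rather than a pointwise limit, which the paper's notation glosses over. What you lose is the concrete constant: the paper's decomposition makes the dependence on $\rho$, $\bar{n}$ and the service-time coefficient of variation visible, and that closed form is the payoff it wants from this proposition.
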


Figure~\ref{fig:renewals} shows $\mathcal{R}_f^{un}(t)$ as indicated by the hatch pattern. The vertical line corresponds to an observation time $t$. An observation is sampled \emph{once} from each renewal cycle. In this figure, all observations have been sampled during the busy phase $B_i$, however, these might have occurred during the idle phase $I_i$ as well. The latter is straightforward in that $\mathcal{R}_f^{un}(t)=0$ if $t\in I_i$. 

Through sampling an observation from each $\mathscr{R}_i$ once, $N(t)$ denotes the amount of renewal cycles observed and $\mathbb{E}_{\mathscr{R}}\left[N\right]=1$. As such, equation~(\ref{eq:steady state unobserved response}) holds as a special case of (\ref{eq: count-average renewal}). More importantly each $\mathcal{R}_f^{un}(t)$ sampled over an $\mathscr{R}_i$ depends on quantities that have time-averages which can be determined from renewal-reward theory. 

For an observation to sample a non-zero $\mathcal{R}_f^{un}(t)$, it needs to observe $n(t)\geq 1$ such that the server is busy and $n(t)-1$ customers are in the buffer. Such an event occurs with steady-state probability $\rho$ that can be computed as
\begin{eqnarray}
\rho & =  & P(t\in B_i)\nonumber \\
& = &\lim_{N\to\infty} \frac{1}{N}\sum_{i=1}^{N}\frac{|B_i|}{|B_i|+|I_i|}\\
& = & \lim_{N\to\infty} \frac{1}{N}\sum_{i=1}^{N}\frac{\overline{B}_i-\underline{B}_i}{\overline{I}_i - \underline{B}_i}
\end{eqnarray}
or by renewal-reward theory
\begin{eqnarray}
 \rho & = & \frac{\mathbb{E}[B]}{\mathbb{E}[\mathscr{R}]}.
\end{eqnarray}

It is common to refer to $\rho$ as the server \emph{utilisation} as it reflects the portion of the time that the server is performing work\footnote{A $G/G/1$ queue with $\rho = 1$ is unstable and will grow without bounds.}. The customer in service will complete its residual service $r_k^\phi$ (see equation \ref{eq: residual lifetime inspection paradox}) derived from $t_k^\phi$ and not $t_k^\mu$ due to the inspection paradox of section~\ref{section: inspection paradox}. The other $n(t)-1$ customers will receive non-biased service durations $t_j^{\mu}\sim F_{\mu}$. The steady-state queue length is determined as a time-average
\begin{eqnarray}
 \bar{n} & = & \lim_{t\to\infty}\left\{ \frac{n(t)}{t}  \right\}\\
 & = & \frac{\mathbb{E}\left[\int_{\mathscr{R}} n(t)\,dt\right]}{\mathbb{E}[\mathscr{R}]}\\
 & = & \mathbb{E}[n(t)].
\end{eqnarray}

Note that $\bar{n}$ takes into account that the queue spends a non-trivial amount of time being empty. Equation~(\ref{eq:steady state unobserved response}) can now be determined
\begin{eqnarray}
 \mathbb{E}\left[ \mathcal{R}_f^{un}\right] & = & \mathbb{E}\left[\mbox{Unfinished service}\right] + \mathbb{E}\left[\mbox{Service to perform}\right]\\
 & = & \rho\times\mathbb{E}\left[r_k^\phi\right] + \mathbb{E}\left[ \sum_{j=1}^{n(t)-1} t_j^{\mu}\right] \\
 & = & \frac{\rho\times \mathbb{E}\left[ \left(t^\mu\right)^2 \right]}{2\mathbb{E}\left[  t^\mu\right]} + \left(\mathbb{E}\left[n(t)\right]-1\right)\mathbb{E}\left[t^\mu \right] \\
& = &  \rho\times \bar{t}^\mu \times \frac{\left(C_V^2(t^\mu) + 1\right)}{2}  + \left(\bar{n}-1\right) \bar{t}^\mu.
\end{eqnarray}

The essence here is that proposition~\ref{prop: time-average unobserved response} and its above expression dictates that the time-dependent part of the unobserved response can safely by replaced by a constant $\mathbb{E}\left[\mathcal{R}_f^{un}\right]$ such that the total unobserved response time is a constant in the long-run. The key result of this section is obtained.
\\
\begin{proposition}
The long-term actual delay \emph{does not} converge to the long-term holding cost.
\end{proposition}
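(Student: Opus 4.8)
The plan is to read the statement off the decomposition already established in the preceding proposition, $\mathcal{R}^{act}(t) = \mathcal{H}(t) + \mathcal{R}^{un}(t)$ with $\mathcal{R}^{un}(t) = \mathcal{R}_i^{un} + \mathcal{R}_f^{un}(t)$, and then to show that the discrepancy $\mathcal{R}^{un}(t)$ tends to a strictly positive constant rather than to zero. Since, as noted in the preamble to this section, both $\mathcal{H}(t)$ and $\mathcal{R}^{act}(t)$ diverge to $\infty$, ``does not converge'' has to be read in the sense that over any sufficiently large horizon $T^*$ the two totals differ by a non-vanishing amount; concretely, I would prove that $\lim_{t\to\infty}\{\mathcal{R}^{act}(t) - \mathcal{H}(t)\}$ exists and is strictly positive, so that $\mathcal{R}^{act}(T^*)$ stays bounded away from $\mathcal{H}(T^*) = \mathcal{R}^{obs}(T^*)$.

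First I would isolate the two pieces of $\mathcal{R}^{un}(t)$. The constant part $\mathcal{R}_i^{un} = \sum_{j}(T_i - t_j^A)\mathbbm{1}_{\{T_i > t_j^A\}}$ is a finite, $t$-independent number: in an ergodic $G/G/1$ queue only $n(T_i) < \infty$ customers are present at $T_i$, each with a finite age $T_i - t_j^A$, so the sum is finite and non-negative; by the remark following its definition it is, moreover, a genuine steady-state constant (independent of $\omega$) once the system is observed in its stationary regime. For the time-dependent part I would invoke Proposition~\ref{prop: time-average unobserved response}, which supplies $\lim_{t\to\infty}\{\mathcal{R}_f^{un}(t,\omega)\} = \mathbb{E}[\mathcal{R}_f^{un}]$ together with the closed form $\mathbb{E}[\mathcal{R}_f^{un}] = \rho\,\bar{t}^\mu\,(C_V^2(t^\mu)+1)/2 + (\bar n - 1)\,\bar{t}^\mu$.

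Next I would argue strict positivity of the limiting gap. Because the queue is stable but non-trivial, $\rho \in (0,1)$, and service durations are strictly positive so $\bar{t}^\mu > 0$ and $C_V^2(t^\mu) \geq 0$; hence the first term $\rho\,\bar{t}^\mu\,(C_V^2(t^\mu)+1)/2$ is already strictly positive, and adding the non-negative quantities $\mathcal{R}_i^{un}$ and $(\bar n - 1)\,\bar{t}^\mu$ cannot destroy this. Therefore $\lim_{t\to\infty}\{\mathcal{R}^{act}(t) - \mathcal{H}(t)\} = \mathcal{R}_i^{un} + \mathbb{E}[\mathcal{R}_f^{un}] > 0$, so the discrepancy is bounded away from $0$ for all large $t$; in particular $\mathcal{R}^{act}(T^*) \neq \mathcal{H}(T^*) = \mathcal{R}^{obs}(T^*)$ for every large $T^*$, and the two totals do not converge, even though $\mathcal{H}(t)=\mathcal{R}^{obs}(t)$ do coincide exactly by Proposition~\ref{prop: obs delay equal holding cost}.

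The main obstacle I anticipate is not the algebra but pinning down the notion of convergence that makes the statement both true and meaningful: since $\mathcal{H}(t)\to\infty$ and $\mathcal{R}^{act}(t)\to\infty$, their ratio tends to $1$ and the per-unit-time averages $\bar{\mathcal{H}}_t$ and $\bar{\mathcal{R}}^{act}_t$ in fact agree, so the honest claim is about the \emph{absolute} gap, which persists at the positive constant identified above. A secondary point needing care is the mode of convergence inherited from Proposition~\ref{prop: time-average unobserved response} (almost sure versus in probability) and the legitimacy of treating $\mathcal{R}_i^{un}$ as non-random in the stationary regime; both are handled by the ergodicity hypothesis and the cited remark, but I would state them explicitly rather than gloss over them, and would note that even if $\mathcal{R}_i^{un}$ retained $\omega$-dependence the conclusion survives because the gap is then still at least $\mathbb{E}[\mathcal{R}_f^{un}] > 0$.
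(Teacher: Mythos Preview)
Your approach is essentially the paper's: both isolate the gap $\mathcal{R}^{act}(t)-\mathcal{H}(t)=\mathcal{R}^{un}(t)$, invoke Proposition~\ref{prop: time-average unobserved response} to replace it by the constant $\mathbb{E}[\mathcal{R}^{un}]$, and then read off non-convergence from that constant being strictly positive; the paper simply packages this last step inside the almost-sure-convergence template of Definition~\ref{def: almost sure convergence}, computing $P(\lim_n|\mathcal{R}_n^{act}-\mathcal{R}_n^{obs}|>\delta)$ and showing it equals $1$ for $0<\delta<\mathbb{E}[\mathcal{R}^{un}]$. One small slip in your positivity argument: $(\bar n-1)\,\bar t^\mu$ is not in general non-negative (e.g.\ in an $M/M/1$ queue with $\rho<1/2$ one has $\bar n=\rho/(1-\rho)<1$), so rather than calling it non-negative you should simply note that $\mathcal{R}_f^{un}$ is itself a non-negative random variable that is strictly positive with probability $\rho>0$, whence $\mathbb{E}[\mathcal{R}_f^{un}]>0$ directly.
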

\begin{proof}
Consider the two sequence $\left\{\mathcal{R}_n^{obs}(\omega)\right\}$ and $\left\{\mathcal{R}_n^{act}(\omega)\right\}$ from the same sample path $\omega$. The sequence progresses each time a new rectangle is added as in figure~\ref{fig: delay vs wip}. Thus, $\mathcal{R}_n^{obs}$ is given by equation~(\ref{eq: total observed response}) and $\mathcal{R}_n^{act}= \mathcal{R}_n^{obs} + \mathcal{R}_n^{un}$ where $\mathcal{R}_n^{un}$ is the hatched area. To compare these two random sequences, almost-sure-convergence (see definition~\ref{def: almost sure convergence}) will be used.
\begin{eqnarray}
 P\left(\lim_{n \to \infty}\left\{ \left| \mathcal{R}_n^{act} -\mathcal{R}_n^{obs}  \right| > \delta\right\} \right) & = & P\left(\lim_{n \to \infty}\left\{ \mathcal{R}_n^{un} \right\} >\delta  \right)\nonumber\\
 & = & P\left( \mathbb{E}\left[  \mathcal{R}^{un} \right] > \delta \right)\nonumber\\
 & = & 
 \begin{cases}
 1 ,&   0<\delta < \mathbb{E}\left[  \mathcal{R}^{un} \right]\\
 0 , & \delta \geq \mathbb{E}\left[  \mathcal{R}^{un} \right]
 \end{cases}
\end{eqnarray}
Hence, they do not converge and will always differ by a constant $\mathbb{E}\left[  \mathcal{R}^{un} \right]$.
\end{proof}

\begin{corollary}\label{corol: limiting response}
If $\mathcal{R}^{obs}(\omega) =   \lim_{t \to \infty} \left\{ \mathcal{R}^{obs}(t,\omega) \right\}$ and $\mathcal{R}^{act}(\omega) =   \lim_{t \to \infty} \left\{ \mathcal{R}^{act}(t,\omega) \right\}$ for $\omega \in \Omega$ then 
\begin{equation}
    \mathcal{R}^{act}(\omega) = \mathcal{R}^{obs}(\omega) + \mathbb{E}\left[  \mathcal{R}^{un} \right]
\end{equation}
which by the additive constant transform~(\ref{eq: additive constant}) means that they are equivalent under optimisation.
\end{corollary}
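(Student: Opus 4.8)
The plan is to obtain the identity by passing to the limit in the exact transient decomposition already in hand, and then to read off the optimisation-equivalence from the additive constant transform. Recall that in proving $\mathcal{R}^{act}(t) \geq \mathcal{H}(t)$ we in fact established the stronger pointwise identity $\mathcal{R}^{act}(t,\omega) = \mathcal{R}^{obs}(t,\omega) + \mathcal{R}^{un}(t,\omega)$ for every $t > T_i$ and every $\omega \in \Omega$, where $\mathcal{R}^{un}(t,\omega) = \mathcal{R}_i^{un}(\omega) + \mathcal{R}_f^{un}(t,\omega)$. So the first step is simply to rearrange this as $\mathcal{R}^{act}(t,\omega) - \mathcal{R}^{obs}(t,\omega) = \mathcal{R}^{un}(t,\omega)$ and to note that the right-hand side stays bounded in $t$, since it only involves the finitely many customers present at $T_i$ or still present at $t$. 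The substantive content of the corollary is therefore the convergence of this bounded difference sequence, which is exactly what the preceding proposition supplies.

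Second, I would evaluate $\lim_{t\to\infty}\mathcal{R}^{un}(t,\omega)$ termwise. The component $\mathcal{R}_i^{un}(\omega)$ does not depend on $t$, so it survives the limit unchanged; invoking the standing ergodicity-plus-steady-state hypothesis (the Remark following the definition of $\mathcal{R}^{act}$), $\mathcal{R}_i^{un}(\omega)$ coincides with its ensemble average $\mathbb{E}\!\left[\mathcal{R}_i^{un}\right]$ and is in particular $\omega$-independent. For the component $\mathcal{R}_f^{un}(t,\omega)$, Proposition~\ref{prop: time-average unobserved response} gives $\lim_{t\to\infty}\mathcal{R}_f^{un}(t,\omega) = \mathbb{E}\!\left[\mathcal{R}_f^{un}\right]$. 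Adding the two and using linearity of expectation yields $\lim_{t\to\infty}\mathcal{R}^{un}(t,\omega) = \mathbb{E}\!\left[\mathcal{R}_i^{un}\right] + \mathbb{E}\!\left[\mathcal{R}_f^{un}\right] = \mathbb{E}\!\left[\mathcal{R}^{un}\right]$, a deterministic constant. Together with the hypothesis that the limits defining $\mathcal{R}^{obs}(\omega)$ and $\mathcal{R}^{act}(\omega)$ exist, this delivers $\mathcal{R}^{act}(\omega) = \mathcal{R}^{obs}(\omega) + \mathbb{E}\!\left[\mathcal{R}^{un}\right]$.

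Third, I would finish by viewing $\mathcal{R}^{obs}$ and $\mathcal{R}^{act}$ as response surfaces over the arguments $\alpha = (\theta,\pi)$. The identity just derived is precisely the additive constant transform~(\ref{eq: additive constant}) with $C = \mathbb{E}\!\left[\mathcal{R}^{un}\right]$, a fixed real number rather than a function of $\alpha$; by the Example following Definition~\ref{def: equivalent under optimisation} this transform leaves the location of every critical point untouched, so $\mathcal{R}^{obs}$ and $\mathcal{R}^{act}$ are equivalent under optimisation. Chaining this with Proposition~\ref{prop: obs delay equal holding cost} (which gives $\mathcal{R}^{obs} = \mathcal{H}$) then also exhibits $\mathcal{R}^{act}$ as equivalent under optimisation to the holding cost, which is the paper's overarching point.

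The main obstacle is one of interpretation rather than computation: when the system is heavily loaded, $\mathcal{R}^{obs}(\omega)$ and $\mathcal{R}^{act}(\omega)$ may both be $+\infty$, so the statement ``$\mathcal{R}^{act} = \mathcal{R}^{obs} + \text{const}$'' is only meaningful as a claim about the difference $\mathcal{R}^{act}(t,\omega) - \mathcal{R}^{obs}(t,\omega)$, and one must avoid ``cancelling infinities.'' This is handled cleanly by keeping the decomposition at the transient level and taking the limit only of the bounded quantity $\mathcal{R}^{un}(t,\omega)$. A secondary point worth flagging explicitly is that treating $\mathcal{R}_i^{un}$ as a constant uses the steady-state assumption of the Remark, which is marginally stronger than bare ergodicity; if the observation window is not assumed to start in the stationary regime, the corollary should instead be read as an identity up to the fixed (sample-path-dependent) offset $\mathcal{R}_i^{un}(\omega)$, which still leaves the two surfaces equivalent under optimisation for each fixed $\omega$.
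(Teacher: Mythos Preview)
Your proposal is correct and follows the same route as the paper, which states the corollary as an immediate consequence of the preceding proposition and offers no separate proof. Your expansion---splitting $\mathcal{R}^{un}$ into its initial and terminal components, invoking Proposition~\ref{prop: time-average unobserved response} for the latter and the Remark for the former, and flagging both the $+\infty$ interpretation issue and the steady-state caveat---is faithful to the paper's logic while adding useful rigour the paper omits.
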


\begin{corollary}
If $\mathcal{H}(\omega) = \lim_{t\to\infty}\left\{\mathcal{H}(t,\omega) \right\}$ then by corollary~\ref{corol: long-run response} and proposition~\ref{prop: obs delay equal holding cost}
\begin{equation}
    \mathcal{R}^{act}(\omega) = \mathcal{H}(\omega) + \mathbb{E}\left[  \mathcal{R}^{un} \right].
\end{equation}
\end{corollary}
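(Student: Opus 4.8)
The plan is to chain the two results already established and let the hypothesis do the rest. First I would invoke Proposition~\ref{prop: obs delay equal holding cost}, which gives the pathwise identity $\mathcal{R}^{obs}(t,\omega) = \mathcal{H}(t,\omega)$ for every finite $t > T_i$. Because this is an equality of the two accumulated quantities at every time instant along the sample path $\omega$, it is preserved when we pass to the limit in the sense used throughout this section (evaluation at a large finite horizon $T^*$ in the stationary regime), so $\mathcal{R}^{obs}(\omega) = \mathcal{H}(\omega)$. No interchange of limit and expectation is required here, since the identity holds term-by-term and the existence of one limit is equivalent to the existence of the other — which is precisely why the stated hypothesis $\mathcal{H}(\omega) = \lim_{t\to\infty}\mathcal{H}(t,\omega)$ is enough to make the argument about well-defined limiting objects rather than divergent ones.

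Second, I would apply Corollary~\ref{corol: limiting response}, which decomposes the long-run actual response as $\mathcal{R}^{act}(\omega) = \mathcal{R}^{obs}(\omega) + \mathbb{E}[\mathcal{R}^{un}]$, the additive term being a genuine constant under the ergodicity assumption (this was settled upstream via Proposition~\ref{prop: time-average unobserved response} and the renewal-reward computation of $\mathbb{E}[\mathcal{R}_f^{un}]$, together with the Remark noting that $\mathcal{R}_i^{un}$ is itself a steady-state value independent of $\omega$). Substituting $\mathcal{R}^{obs}(\omega) = \mathcal{H}(\omega)$ into this decomposition yields $\mathcal{R}^{act}(\omega) = \mathcal{H}(\omega) + \mathbb{E}[\mathcal{R}^{un}]$, which is the assertion.

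In short, the proof is a two-line substitution, and there is no real obstacle: the only point that warrants a sentence of care is that the hypothesis guarantees the limits in question are the meaningful finite-horizon quantities the corollary presupposes, so that the chain $\mathcal{R}^{act}(\omega) = \mathcal{R}^{obs}(\omega) + \mathbb{E}[\mathcal{R}^{un}] = \mathcal{H}(\omega) + \mathbb{E}[\mathcal{R}^{un}]$ is legitimate. If one wished to be maximally explicit about the constancy of $\mathbb{E}[\mathcal{R}^{un}]$ in $\omega$, one would simply cite the Remark following the definition of $\mathcal{R}^{un}(t,\omega)$ and the ergodicity of the $G/G/1$ system, but that work does not need to be repeated here.
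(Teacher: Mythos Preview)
Your proposal is correct and matches the paper's intended argument exactly: the corollary in the paper carries no separate proof, the justification being the two citations in its statement, and your two-step substitution (Proposition~\ref{prop: obs delay equal holding cost} to identify $\mathcal{R}^{obs}(\omega)$ with $\mathcal{H}(\omega)$, then Corollary~\ref{corol: limiting response} to add $\mathbb{E}[\mathcal{R}^{un}]$) is precisely that chain spelled out. One minor observation: the paper's statement cites \ref{corol: long-run response}, which is actually the \emph{subsequent} corollary; you correctly recognised that the operative input is Corollary~\ref{corol: limiting response}, and this appears to be a labelling slip in the original.
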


\begin{corollary}\label{corol: long-run response}
As a practical application/interpretation of corollary~\ref{corol: limiting response}, consider finite $T^*$ that is large enough for the ergodic system to be in a stationary regime. The following holds exactly and allows for equivalence under optimisation
\begin{equation}
    \mathcal{R}^{act}(T^*,\omega) = \mathcal{R}^{obs}(T^*,\omega) + \mathcal{R}_i^{un}(\omega) + \mathcal{R}_f^{un}(T^*\omega) 
\end{equation}
while the following approximation may be useful
\begin{eqnarray}
    \mathcal{R}^{act}(T^*,\omega)&  \approx &  \mathcal{R}^{obs}(T^*\omega) + \mathbb{E}\left[  \mathcal{R}^{un} \right]. \\
    &  \approx &  \mathcal{H}(T^*\omega) + \mathbb{E}\left[  \mathcal{R}^{un} \right] \label{eq: useful approximation}
\end{eqnarray}
\end{corollary}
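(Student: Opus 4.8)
The plan is to separate the exact identity from the approximation, since the first is pure bookkeeping and only the second uses ergodicity. For the exact line, I would start from the decomposition already obtained in the proof that $\mathcal{R}^{act}(t) \geq \mathcal{H}(t)$: equation~(\ref{eq: actual delay sequence}) shows $\mathcal{R}^{act}(t,\omega) = \mathcal{R}^{obs}(t,\omega) + \mathcal{R}^{un}(t,\omega)$, and $\mathcal{R}^{un}(t,\omega)$ splits as the initial truncation $\mathcal{R}_i^{un}(\omega)$ (the $\mathbbm{1}_{\{T_i > t_j^A\}}$ terms) plus the terminal truncation $\mathcal{R}_f^{un}(t,\omega)$ (the $\mathbbm{1}_{\{t_j^D > t\}}$ terms). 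Evaluating at $t = T^*$ reproduces the first display verbatim; nothing beyond $T^* > T_i$ is needed there.

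For the approximation $\mathcal{R}^{act}(T^*,\omega) \approx \mathcal{R}^{obs}(T^*,\omega) + \mathbb{E}\left[\mathcal{R}^{un}\right]$ I would replace each of the two truncation terms by its stationary mean. The initial term is covered by the earlier remark observing that, once the queue has been running long enough before $T_i$ to be in its stationary regime, $\mathcal{R}_i^{un}$ is itself a steady-state quantity whose time-average coincides with the ensemble average $\mathbb{E}\left[\mathcal{R}_i^{un}\right]$. The terminal term is covered by Proposition~\ref{prop: time-average unobserved response}, which gives $\lim_{t\to\infty}\mathcal{R}_f^{un}(t,\omega) = \mathbb{E}\left[\mathcal{R}_f^{un}\right]$, so that taking $T^*$ in the stationary regime places it far enough along this convergence that $\mathcal{R}_f^{un}(T^*,\omega) \approx \mathbb{E}\left[\mathcal{R}_f^{un}\right]$. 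Adding the two and using linearity, $\mathbb{E}\left[\mathcal{R}^{un}\right] = \mathbb{E}\left[\mathcal{R}_i^{un}\right] + \mathbb{E}\left[\mathcal{R}_f^{un}\right]$, gives the claim. The final line~(\ref{eq: useful approximation}) is then immediate: Proposition~\ref{prop: obs delay equal holding cost} gives $\mathcal{R}^{obs}(t) = \mathcal{H}(t)$ for all $t > T_i$, in particular at $t = T^*$, so $\mathcal{R}^{obs}(T^*,\omega)$ may be swapped for $\mathcal{H}(T^*,\omega)$ with no error at all.

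The main obstacle is giving the phrase ``large enough for the ergodic system to be in a stationary regime'' quantitative content. Both Proposition~\ref{prop: time-average unobserved response} and the remark are limiting statements, so in strictness they only license replacing $\mathcal{R}_f^{un}(T^*,\omega)$ by $\mathbb{E}\left[\mathcal{R}_f^{un}\right]$ up to an error that vanishes as $T^* \to \infty$; controlling its rate, or claiming a bound uniform in $\omega$ when $\mathcal{R}_f^{un}(T^*,\omega)$ is still genuinely random at any fixed $T^*$, would require a mixing-type hypothesis the paper does not impose. I would therefore keep the second and third lines as honest approximations, present them as the practically usable form of the exact identity, and record explicitly that the approximation error equals $\left(\mathcal{R}_i^{un}(\omega) - \mathbb{E}\left[\mathcal{R}_i^{un}\right]\right) + \left(\mathcal{R}_f^{un}(T^*,\omega) - \mathbb{E}\left[\mathcal{R}_f^{un}\right]\right)$, which is precisely the quantity the earlier convergence results drive toward zero.
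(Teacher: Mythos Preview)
Your proposal is correct and follows essentially the same reasoning the paper relies on; in fact the paper gives no explicit proof for this corollary, simply presenting it as a ``practical application/interpretation'' of Corollary~\ref{corol: limiting response} and leaving the justification implicit in the preceding propositions. Your decomposition via equation~(\ref{eq: actual delay sequence}) for the exact identity, your invocation of Proposition~\ref{prop: time-average unobserved response} and the earlier remark for the approximation step, and your final appeal to Proposition~\ref{prop: obs delay equal holding cost} are exactly the ingredients the paper has set up, and your explicit identification of the approximation error is a more careful treatment than the paper itself provides.
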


\begin{remark}
The size of $\mathbb{E}\left[  \mathcal{R}^{un} \right]$ may become asymptotically negligible when compared to $\mathcal{R}^{act}(T^*,\omega)$ and $\mathcal{R}^{obs}(T^*\omega)$ such that the two are approximately equal. In practical finite length simulations, a small difference will be present. The analysis of this section has explained and determined the source of this bias.
\end{remark}

\subsection{Long-term count and time averages}

This section is interested interested in showing $\bar{\mathcal{H}}_t \propto \bar{\mathcal{R}}^{\psi}_t$ where $\psi = \{ obs,act \}$ as to denote a relationship between either type of response time. In fact, it will be established that the type of response time chosen makes no difference as it leads to the same long-term averages.

\begin{lemma}\label{lemma: reponse time averages}
Long-run time-averages for actual and observed response time are equal such that
\begin{equation}
    \bar{\mathcal{R}}_t^{act} = \bar{\mathcal{R}}_t^{obs}
\end{equation}
\end{lemma}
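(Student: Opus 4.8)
The plan is to use the definition of the long-run average per unit time~(\ref{eq: average per unit time}) applied to $\mathcal{R}^{act}(t)$ and $\mathcal{R}^{obs}(t)$, and then exploit corollary~\ref{corol: limiting response} (and the surrounding results on $\mathcal{R}^{un}$) which already tells us that the two \emph{total} cumulative quantities differ only by a bounded quantity. Concretely, write
\begin{equation}
    \bar{\mathcal{R}}_t^{act} = \lim_{T\to\infty}\left\{\frac{1}{T}\,\mathcal{R}^{act}(T_i+T,\omega)\right\}
    = \lim_{T\to\infty}\left\{\frac{1}{T}\Bigl(\mathcal{R}^{obs}(T_i+T,\omega) + \mathcal{R}_i^{un}(\omega) + \mathcal{R}_f^{un}(T_i+T,\omega)\Bigr)\right\}.
\end{equation}
The first term converges to $\bar{\mathcal{R}}_t^{obs}$ by definition. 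So the claim reduces to showing the remaining two terms vanish in the limit after division by $T$.

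First I would handle $\mathcal{R}_i^{un}(\omega)$: by the remark following the discussion of $\mathcal{R}^{un}$, this is a constant (for fixed $\omega$ and $T_i$), so $\frac{1}{T}\mathcal{R}_i^{un}(\omega)\to 0$ trivially. Next I would handle $\mathcal{R}_f^{un}(T_i+T,\omega)$: by proposition~\ref{prop: time-average unobserved response}, for an ergodic system in its stationary regime this converges (along the sampling scheme of one observation per renewal cycle) to the finite constant $\mathbb{E}[\mathcal{R}_f^{un}]$, and more to the point it is $O(1)$ in $T$ — it is a finite sum of residual service and pending service durations, whose expectation was computed explicitly as $\rho\,\bar t^\mu (C_V^2(t^\mu)+1)/2 + (\bar n - 1)\bar t^\mu < \infty$. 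Hence $\frac{1}{T}\mathcal{R}_f^{un}(T_i+T,\omega)\to 0$ as well, and the limit collapses to $\bar{\mathcal{R}}_t^{obs}$.

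The main obstacle is the middle step: asserting $\frac{1}{T}\mathcal{R}_f^{un}(T_i+T,\omega)\to 0$ rigorously for \emph{almost every} sample path, not merely in expectation. Proposition~\ref{prop: time-average unobserved response} is phrased as an equality of a limit with an ensemble average under the stationarity assumption, so I would invoke it together with ergodicity (definition~\ref{def: ergodic}) to conclude that $\mathcal{R}_f^{un}(t,\omega)$ does not grow with $t$ — it is tight, taking values governed by the steady-state queue length $\bar n$ and residual-service distributions, all of which have finite mean. Since $\mathcal{R}_f^{un}(t,\omega)\geq 0$ is bounded in distribution uniformly in $t$ (it never exceeds the total remaining work of the $n(t)$ jobs present, which is a.s. finite at each $t$ and has finite stationary expectation), dividing by $T\to\infty$ kills it almost surely. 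I would then note that, because both $\mathcal{R}_i^{un}$ and $\mathcal{R}_f^{un}$ are nonnegative, the inequality $\mathcal{R}^{obs}(t)\le \mathcal{R}^{act}(t)$ gives $\bar{\mathcal{R}}_t^{obs}\le \bar{\mathcal{R}}_t^{act}$ for free, and the vanishing-after-normalisation argument gives the reverse inequality, yielding equality. This is essentially the same phenomenon already exploited in corollary~\ref{corol: long-run response}: a bounded additive discrepancy between two cumulative costs is invisible to a per-unit-time (or per-customer) long-run average.
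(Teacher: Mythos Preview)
Your proposal is correct and follows essentially the same route as the paper: decompose $\mathcal{R}^{act}(t)=\mathcal{R}^{obs}(t)+\mathcal{R}_i^{un}+\mathcal{R}_f^{un}(t)$, divide by $T$, and use proposition~\ref{prop: time-average unobserved response} together with ergodicity to conclude that both unobserved pieces vanish after normalisation by $T\to\infty$. The only difference is that you add some extra care about the almost-sure nature of the convergence of $\mathcal{R}_f^{un}(t,\omega)/T$ and a one-sided sandwich argument, whereas the paper simply writes $\lim_{t\to\infty}\mathcal{R}_f^{un}(t,\omega)/t = \mathbb{E}[\mathcal{R}_f^{un}]/\lim_{t\to\infty}t = 0$ and invokes ergodicity to drop the $\omega$-dependence.
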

\begin{proof}
\begin{eqnarray}
    \bar{\mathcal{R}}_t^{act}(\omega) & = & \lim_{t\to\infty}\left\{ \frac{1}{t} \mathcal{R}^{act}(t,\omega)\right\} \nonumber\\
     & = & \lim_{t\to\infty}\left\{ \frac{ \mathcal{R}^{obs}(t,\omega)}{t} + \frac{\mathcal{R}^{un}(t,\omega)}{t}\right\}\nonumber \\
    & = & \bar{\mathcal{R}}_t^{obs}(\omega) + \lim_{t\to\infty}\left\{  \frac{\mathcal{R}_i^{un}(\omega)}{t}\right\} + \lim_{t\to\infty}\left\{  \frac{\mathcal{R}_f^{un}(t,\omega)}{t}\right\}\nonumber\\
    & = & \bar{\mathcal{R}}_t^{obs}(\omega) +   \frac{\mathbb{E}\left[\mathcal{R}_f^{un}\right]}{\lim_{t\to\infty}\left\{t\right\}}\label{step: ergodic assumption} \\
    & = & \bar{\mathcal{R}}_t^{obs}(\omega)\nonumber
\end{eqnarray}
Step~(\ref{step: ergodic assumption}) assumes the system to be ergodic Hence, time-averages equal ensemble averages $\bar{X}_t(\omega) = \mathbb{E}\left[X\right]$ such that dependence on the sample path can be dropped which results in the desired $\bar{\mathcal{R}}_t^{act} = \bar{\mathcal{R}}_t^{obs}$.
\end{proof}

\begin{lemma}\label{lemma: reponse count averages}
Long-run count-averages for actual and observed response time are equal such that
\begin{equation}
    \bar{\mathcal{R}}_n^{act} = \bar{\mathcal{R}}_n^{obs}
\end{equation}
\end{lemma}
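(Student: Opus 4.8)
The plan is to imitate the proof of Lemma~\ref{lemma: reponse time averages}, replacing the normalising factor $t$ in the time-average by the customer count $N(t)=n(T_i)+A(t)$ appearing in definition~(\ref{eq: average per customer}). First I would expand, using the decomposition $\mathcal{R}^{act}(t,\omega)=\mathcal{R}^{obs}(t,\omega)+\mathcal{R}_i^{un}(\omega)+\mathcal{R}_f^{un}(t,\omega)$ recorded in equation~(\ref{eq: actual delay sequence}),
\[
\bar{\mathcal{R}}_n^{act}(\omega)=\lim_{t\to\infty}\left\{\frac{\mathcal{R}^{act}(t,\omega)}{N(t)}\right\}
=\lim_{t\to\infty}\left\{\frac{\mathcal{R}^{obs}(t,\omega)}{N(t)}+\frac{\mathcal{R}_i^{un}(\omega)}{N(t)}+\frac{\mathcal{R}_f^{un}(t,\omega)}{N(t)}\right\}.
\]

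Second, I would argue that $N(t)\to\infty$ as $t\to\infty$: since the system is ergodic with finite mean inter-arrival time, the mean arrival rate $\lambda$ is strictly positive, so $A(t)\to\infty$ and hence $N(t)=n(T_i)+A(t)\to\infty$. This makes the middle term vanish, because its numerator $\mathcal{R}_i^{un}(\omega)$ is a fixed finite constant determined only by $\omega$ and the placement of $T_i$. The last term also vanishes, because by Proposition~\ref{prop: time-average unobserved response} the time-dependent part satisfies $\mathcal{R}_f^{un}(t,\omega)\to\mathbb{E}[\mathcal{R}_f^{un}]<\infty$, so it is a bounded quantity divided by something diverging. What remains is $\bar{\mathcal{R}}_n^{act}(\omega)=\lim_{t\to\infty}\mathcal{R}^{obs}(t,\omega)/N(t)=\bar{\mathcal{R}}_n^{obs}(\omega)$, and invoking ergodicity once more (time-averages equal ensemble averages) lets me drop the dependence on $\omega$, giving $\bar{\mathcal{R}}_n^{act}=\bar{\mathcal{R}}_n^{obs}$.

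The only step that is more than bookkeeping is the justification that $N(t)\to\infty$ from the ergodicity hypothesis, together with the confirmation that $\mathcal{R}_f^{un}(t,\omega)$ is genuinely $O(1)$ and not merely Cesàro-summable; Proposition~\ref{prop: time-average unobserved response} supplies exactly the pointwise limit required, so the argument closes cleanly. It is worth noting the contrast with Lemma~\ref{lemma: reponse time averages}: there the shared limit is a nonzero time-average, whereas here the unobserved contributions wash out against the growing denominator $N(t)$, so the conclusion is that the two per-customer response times coincide as genuine finite quantities, not that their difference is trivially zero.
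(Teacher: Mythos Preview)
Your proposal is correct and mirrors the paper's own proof almost step for step: decompose $\mathcal{R}^{act}$ into $\mathcal{R}^{obs}+\mathcal{R}_i^{un}+\mathcal{R}_f^{un}$, observe that $N(t)\to\infty$, kill the two unobserved terms (the latter via Proposition~\ref{prop: time-average unobserved response}), and then remove the $\omega$-dependence. The only cosmetic difference is that the paper pins down the $\omega$-independence of $\bar{\mathcal{R}}_n^{obs}$ by explicitly applying the renewal-reward count-average identity~(\ref{eq: count-average renewal}), whereas you invoke ergodicity more generically.
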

\begin{proof}
Recall that the number of customers observed by the system is a cumulative quantity $N(t) = n(T_i) + A(t)$ such that $\lim_{t\to\infty}\{N(t)\} = \infty$.
\begin{eqnarray}
    \bar{\mathcal{R}}_n^{act}(\omega) & = & \lim_{t\to\infty}\left\{ \frac{1}{N(t)} \mathcal{R}^{act}(t,\omega)\right\} \nonumber\\
     & = & \lim_{t\to\infty}\left\{ \frac{ \mathcal{R}^{obs}(t,\omega)}{N(t)} + \frac{\mathcal{R}^{un}(t,\omega)}{N(t)}\right\}\nonumber \\
    & = & \bar{\mathcal{R}}_n^{obs}(\omega) + \lim_{t\to\infty}\left\{  \frac{\mathcal{R}_i^{un}(\omega)}{N(t)}\right\} + \lim_{t\to\infty}\left\{  \frac{\mathcal{R}_f^{un}(t,\omega)}{N(t)}\right\}\nonumber\\
    & = & \bar{\mathcal{R}}_n^{obs}(\omega) +   \frac{\mathbb{E}\left[\mathcal{R}_f^{un}\right]}{\lim_{t\to\infty}\left\{N(t)\right\}}\label{step: ergodic assumption repeat} \\
    & = & \bar{\mathcal{R}}_n^{obs}(\omega)\nonumber
\end{eqnarray}
Step~(\ref{step: ergodic assumption repeat}) assumes the system to be ergodic. Hence, using equation~(\ref{eq: count-average renewal}) from renewal-reward theorem 
\begin{eqnarray}
    \bar{\mathcal{R}}_n^{obs}(\omega) & = & \lim_{t\to\infty}\left\{ \frac{\mathcal{R}^{obs}(t)}{N(t)} \right\}\nonumber\\
    & = & \frac{\mathbb{E}_{\mathscr{R}}\left[\mathcal{R}^{obs}\right]}{\mathbb{E}_{\mathscr{R}}\left[N\right]}\label{eq: application of count renewal}
\end{eqnarray}
which allows for independence from the sample-path $\omega$.
\end{proof}

\begin{remark}
The application of renewal-reward theory using $\mathcal{R}^{act}$ as in (\ref{eq: application of count renewal}) would not work as only the first and last renewal intervals would contain unobserved response time. Renewal-reward theory requires all renewal intervals to be treated as if they were i.i.d. random variables.
\end{remark}

\begin{proposition}\label{prop: general average reponse time}
Long-run count-average and time-average response time does not depend on the type of response time used. 
\end{proposition}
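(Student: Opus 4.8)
The plan is to show that Proposition~\ref{prop: general average reponse time} is essentially a direct corollary of the two preceding lemmas, so the proof is a short assembly rather than a new argument. Recall that the proposition asserts that neither the time-average nor the count-average of response time depends on the choice $\psi \in \{obs, act\}$. Since Lemma~\ref{lemma: reponse time averages} already gives $\bar{\mathcal{R}}_t^{act} = \bar{\mathcal{R}}_t^{obs}$ and Lemma~\ref{lemma: reponse count averages} already gives $\bar{\mathcal{R}}_n^{act} = \bar{\mathcal{R}}_n^{obs}$, the content of the proposition is exactly the conjunction of these two equalities. So the first step is simply to invoke both lemmas and conclude that we may write unambiguous symbols $\bar{\mathcal{R}}_t$ and $\bar{\mathcal{R}}_n$ with the superscript suppressed.

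Next I would make explicit the one point that is implicit in the lemmas but worth stating in the proof: the reason the distinction collapses is that the unobserved response $\mathcal{R}^{un}(t,\omega)$ is, in the long run, a \emph{bounded} (indeed constant, by Proposition~\ref{prop: time-average unobserved response}) quantity, whereas both normalisers $t$ and $N(t) = n(T_i) + A(t)$ diverge to infinity under the ergodicity assumption. Dividing a bounded quantity by a divergent one kills the contribution, which is precisely steps~(\ref{step: ergodic assumption}) and~(\ref{step: ergodic assumption repeat}) in the lemma proofs. I would restate this briefly so the reader sees that the phenomenon is uniform across both averaging conventions, and note that ergodicity is what guarantees $A(t) \to \infty$ almost surely (positive recurrence forces infinitely many renewal cycles, hence infinitely many arrivals).

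Finally I would close by remarking on the caveat already flagged in the Remark following Lemma~\ref{lemma: reponse count averages}: although the \emph{averages} coincide, one cannot obtain $\bar{\mathcal{R}}_n^{act}$ directly via renewal-reward applied to $\mathcal{R}^{act}$, because $\mathcal{R}^{un}$ is concentrated in the first and last renewal cycles and so violates the i.i.d.\ requirement; the legitimate route is through $\mathcal{R}^{obs}$ as in~(\ref{eq: application of count renewal}). The proof should therefore phrase the conclusion as: compute the average using $\mathcal{R}^{obs}$ (where renewal-reward applies cleanly), and then appeal to Lemmas~\ref{lemma: reponse time averages} and~\ref{lemma: reponse count averages} to transfer the value to $\mathcal{R}^{act}$.

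I do not anticipate a genuine obstacle here, since all the work has been done in the lemmas; the only thing to be careful about is not to overclaim — the proposition is about long-run \emph{averages}, not about the totals $\mathcal{R}^{act}$ vs.\ $\mathcal{R}^{obs}$, which by the earlier propositions differ by the nonvanishing constant $\mathbb{E}[\mathcal{R}^{un}]$. So the main "difficulty" is purely expository: making sure the reader understands why a constant gap between the totals is compatible with equality of the averages, namely that the gap is washed out by normalisation. I would add one sentence to that effect and then let the two lemmas do the rest.
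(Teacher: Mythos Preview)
Your proposal is correct and follows exactly the paper's approach: the paper's proof is the single line ``This is clear from lemma~\ref{lemma: reponse time averages} and lemma~\ref{lemma: reponse count averages}.'' Your additional exposition about bounded $\mathcal{R}^{un}$ over divergent normalisers and the renewal-reward caveat is accurate but goes beyond what the paper writes; the core argument is identical.
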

\begin{proof}
This is clear from lemma~\ref{lemma: reponse time averages} and lemma~\ref{lemma: reponse count averages}.
\end{proof}

\begin{lemma}\label{lemma: count average holding cost equal reponse}
Long-run count averages for holding cost and observed response are equal such that
\begin{eqnarray}
    \bar{\mathcal{H}}_n = \bar{\mathcal{R}}_n^{obs}.
\end{eqnarray}
\end{lemma}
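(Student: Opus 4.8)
The plan is to deduce this directly from Proposition~\ref{prop: obs delay equal holding cost}, which already gives the pathwise identity $\mathcal{H}(t,\omega) = \mathcal{R}^{obs}(t,\omega)$ for every $t > T_i$ and every $\omega \in \Omega$. Since both count-averages are built by dividing these same quantities by the common count function $N(t) = n(T_i) + A(t)$ and sending $t \to \infty$, the two limiting sequences are literally identical, so the limits agree whenever they exist.

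Concretely, I would first write out the definition of the count-average applied to holding cost,
\begin{equation}
    \bar{\mathcal{H}}_n(\omega) = \lim_{t\to\infty}\left\{\frac{\mathcal{H}(t,\omega)}{N(t)}\right\},
\end{equation}
then substitute $\mathcal{H}(t,\omega) = \mathcal{R}^{obs}(t,\omega)$ from Proposition~\ref{prop: obs delay equal holding cost}, obtaining
\begin{equation}
    \bar{\mathcal{H}}_n(\omega) = \lim_{t\to\infty}\left\{\frac{\mathcal{R}^{obs}(t,\omega)}{N(t)}\right\} = \bar{\mathcal{R}}_n^{obs}(\omega).
\end{equation}
I would then remark that, exactly as in the proof of Lemma~\ref{lemma: reponse count averages} (equation~(\ref{eq: application of count renewal})), this limit can be evaluated by the renewal-reward count-average~(\ref{eq: count-average renewal}) as $\mathbb{E}_{\mathscr{R}}[\mathcal{R}^{obs}]/\mathbb{E}_{\mathscr{R}}[N]$, which is finite and independent of the sample path $\omega$ under the ergodicity assumption, so the dependence on $\omega$ may be dropped and the statement is obtained.

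There is essentially no obstacle here: the content is entirely carried by Proposition~\ref{prop: obs delay equal holding cost}, and no ergodic or limiting argument is needed beyond what already justifies the count-average itself. The only point worth stating explicitly is that the equality $\mathcal{H}(t) = \mathcal{R}^{obs}(t)$ is a finite-horizon, sample-path identity — not merely an equality of limits — which is why it passes through the quotient by $N(t)$ and the limit without any further hypotheses.
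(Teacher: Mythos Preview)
Your proposal is correct and matches the paper's own proof essentially step for step: both invoke the pathwise identity $\mathcal{H}(t,\omega)=\mathcal{R}^{obs}(t,\omega)$ from Proposition~\ref{prop: obs delay equal holding cost}, divide by $N(t)$, take the limit, and then appeal to the renewal-reward count-average~(\ref{eq: application of count renewal}) to drop the $\omega$-dependence. The only cosmetic difference is direction --- the paper starts from $\bar{\mathcal{R}}_n^{obs}(\omega)$ and arrives at $\bar{\mathcal{H}}_n(\omega)$, whereas you go the other way.
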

\begin{proof}
\begin{eqnarray}
    \bar{\mathcal{R}}_n^{obs}(\omega) & = & \lim_{t\to\infty}\left\{ \frac{1}{N(t)} \mathcal{R}^{obs}(t,\omega)\right\} \nonumber\\
    & = & \lim_{t\to\infty}\left\{ \frac{1}{N(t)} \mathcal{H}(t,\omega)\right\}\nonumber\\
    & = & \bar{\mathcal{H}}_n(\omega)\nonumber
\end{eqnarray}
Recall that $\bar{\mathcal{R}}_n^{obs}$ is independent of $\omega$ as shown using renewal-reward theory in (\ref{eq: application of count renewal}). By virtue of this result and the proven equality, $\bar{\mathcal{H}}_n$ is also independent of $\omega$.
\end{proof}

\begin{corollary}\label{corol: Hn equals Tn}
Putting together lemma~\ref{lemma: reponse count averages} and lemma~\ref{lemma: count average holding cost equal reponse} results in
\begin{eqnarray}
    \bar{\mathcal{H}}_n = \bar{\mathcal{R}}_n^{act}.
\end{eqnarray}
\end{corollary}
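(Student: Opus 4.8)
The plan is to obtain this identity purely by transitivity from the two lemmas that immediately precede it, since all of the analytic content has already been discharged there. First I would record that Lemma~\ref{lemma: reponse count averages} gives $\bar{\mathcal{R}}_n^{act} = \bar{\mathcal{R}}_n^{obs}$: this is where ergodicity does the real work, because the proof of that lemma divides the truncated tail term $\mathcal{R}_f^{un}(t,\omega)$ by the diverging count $N(t) = n(T_i) + A(t) \to \infty$ and replaces its numerator by the finite steady-state value $\mathbb{E}\left[\mathcal{R}_f^{un}\right]$ supplied by Proposition~\ref{prop: time-average unobserved response}, so the quotient vanishes in the limit. Second I would record that Lemma~\ref{lemma: count average holding cost equal reponse} gives $\bar{\mathcal{H}}_n = \bar{\mathcal{R}}_n^{obs}$; this rests only on the pointwise equality $\mathcal{H}(t) = \mathcal{R}^{obs}(t)$ of Proposition~\ref{prop: obs delay equal holding cost}, which holds for every $t > T_i$ with no assumption on the service order.

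Chaining the two equalities then yields $\bar{\mathcal{H}}_n = \bar{\mathcal{R}}_n^{obs} = \bar{\mathcal{R}}_n^{act}$, which is exactly the claim. I would also add a short remark that both lemmas separately establish independence from the sample path $\omega$ — Lemma~\ref{lemma: count average holding cost equal reponse} inherits it from the renewal-reward computation~(\ref{eq: application of count renewal}) — so the combined statement is a genuine deterministic identity among the three long-run count-averages, not merely an almost-sure one, and it inherits the service-order-agnostic and open-system scope of its ingredients.

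The only obstacle, such as it is, is bookkeeping rather than mathematics: one must check that the hypotheses of the two lemmas are mutually compatible (both require only ergodicity together with $t > T_i \geq 0$, and neither constrains the service discipline), and that the limits in the two lemmas are taken against the same clock and the same counting process $N(t)$, so that the transitive step is literally an equality of the same three numbers. There is no genuine analytic difficulty remaining at this stage — the proof is a one-line composition of Lemma~\ref{lemma: reponse count averages} and Lemma~\ref{lemma: count average holding cost equal reponse}.
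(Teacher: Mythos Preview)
Your proposal is correct and matches the paper's approach exactly: the corollary is stated without a separate proof precisely because it is the immediate transitive combination of Lemma~\ref{lemma: reponse count averages} and Lemma~\ref{lemma: count average holding cost equal reponse}, and your account of why the hypotheses align and why $\omega$-independence carries over is accurate.
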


By proposition~\ref{prop: general average reponse time}, the distinction between actual and observed response time can be dropped when working with average performance functions. From here on, $\bar{\mathcal{R}}_n$ and $\bar{\mathcal{R}}_t$ will be used.

\begin{theorem}\label{theorem: main}
\begin{equation}
    \bar{\mathcal{H}}_t = \lambda \bar{\mathcal{R}}_n \label{eq:main}
\end{equation}
\end{theorem}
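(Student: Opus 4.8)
The plan is to derive (\ref{eq:main}) directly from the definitions of the two long-run averages, leaning on the identity $\mathcal{H}(t)=\mathcal{R}^{obs}(t)$ (Proposition~\ref{prop: obs delay equal holding cost}) together with the equivalence of all the response-time averages already established (Proposition~\ref{prop: general average reponse time}, Lemma~\ref{lemma: count average holding cost equal reponse}, Corollary~\ref{corol: Hn equals Tn}). The key algebraic manoeuvre is to insert the counting process $N(t)$ into the time-average of holding cost: write $\bar{\mathcal{H}}_t=\lim_{t\to\infty}\mathcal{H}(t)/t=\lim_{t\to\infty}\mathcal{R}^{obs}(t)/t$ and then factor $\mathcal{R}^{obs}(t)/t=\bigl(\mathcal{R}^{obs}(t)/N(t)\bigr)\cdot\bigl(N(t)/t\bigr)$, which separates the expression into a ``per customer'' piece and an ``arrivals per unit time'' piece.

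First I would identify the first factor. By Lemma~\ref{lemma: count average holding cost equal reponse} and the renewal-reward argument in its proof, $\lim_{t\to\infty}\mathcal{R}^{obs}(t)/N(t)=\bar{\mathcal{R}}_n^{obs}$, which by Proposition~\ref{prop: general average reponse time} equals the generic $\bar{\mathcal{R}}_n$ and is sample-path independent under ergodicity. Second I would identify the limit of the remaining factor: since $N(t)=n(T_i)+A(t)$ with $n(T_i)$ a fixed finite constant, $N(t)/t$ has the same limit as $A(t)/t$, and because the inter-arrival durations are i.i.d.\ with mean $\mathbb{E}[t^\lambda]=1/\lambda$, the strong law of large numbers (equivalently the elementary renewal theorem implicit in Section~\ref{section: renewal-reward}) gives $A(t)/t\xrightarrow{a.s.}\lambda$, hence $\lim_{t\to\infty}N(t)/t=\lambda$. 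With both factor limits shown to exist (finite, positive, deterministic), the limit of the product is the product of the limits, so $\bar{\mathcal{H}}_t=\bar{\mathcal{R}}_n\cdot\lambda=\lambda\bar{\mathcal{R}}_n$, which is (\ref{eq:main}).

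It is worth remarking, though not strictly part of the proof, that this is Little's law in disguise: since $\bar{\mathcal{H}}_t=c\,\bar n$ and $\bar{\mathcal{R}}_n=c\,\bar\Delta$, identity (\ref{eq:main}) reads $\bar n=\lambda\bar\Delta$, and the cost weight $c$ cancels on both sides, explaining its absence from (\ref{eq:main}).

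The step I expect to require the most care is the passage from the finite-horizon ratios to the product of limits: one must check that $\bar{\mathcal{R}}_n$ and $\lim_{t\to\infty}N(t)/t$ genuinely exist as finite deterministic limits \emph{before} splitting, and this is exactly where the ergodicity hypothesis --- via positive recurrence and the renewal-reward theorem --- is indispensable; without it neither $\mathcal{R}^{obs}(t)/N(t)$ nor $A(t)/t$ need converge and the factorisation would be vacuous. A secondary bookkeeping point is to invoke Proposition~\ref{prop: general average reponse time} at the outset so that it is legitimate to replace the generic $\bar{\mathcal{R}}_n$ by $\bar{\mathcal{R}}_n^{obs}$ when applying Lemma~\ref{lemma: count average holding cost equal reponse}.
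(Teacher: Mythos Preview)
Your proposal is correct and follows essentially the same route as the paper: both insert the counting process $N(t)$ to factor the ratio into a per-customer piece and an arrival-rate piece, compute $\lim_{t\to\infty}N(t)/t=\lambda$ via $N(t)=n(T_i)+A(t)$, and then invoke Corollary~\ref{corol: Hn equals Tn} (equivalently Lemma~\ref{lemma: count average holding cost equal reponse} plus Proposition~\ref{prop: general average reponse time}) to identify the per-customer factor with $\bar{\mathcal{R}}_n$. The only cosmetic difference is that the paper starts from $\bar{\mathcal{H}}_n$ and factors $\mathcal{H}(t)/N(t)=\bigl(t/N(t)\bigr)\bigl(\mathcal{H}(t)/t\bigr)$ before rearranging, whereas you start from $\bar{\mathcal{H}}_t$ and factor $\mathcal{H}(t)/t=\bigl(\mathcal{R}^{obs}(t)/N(t)\bigr)\bigl(N(t)/t\bigr)$ directly; the substance is identical.
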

\begin{proof}
The long-run average holding cost per unit time takes the form of (\ref{eq: average per unit time}).
\begin{eqnarray}
    \bar{\mathcal{H}}_t(\omega) & = & \lim_{t\to\infty}\left\{\frac{\mathcal{H}(t,\omega)}{t}\right\}.\label{eq: holding cost per unit time}
\end{eqnarray}
The long-run average holding cost per customer starts the proof.
\begin{eqnarray}
    \bar{\mathcal{H}}_n(\omega) & = & \lim_{t\to\infty}\left\{\frac{1}{N(t,\omega)}\mathcal{H}(t,\omega)\right\}\\
    & = & \lim_{t\to\infty}\left\{\frac{t}{N(t,\omega)}\times \frac{\mathcal{H}(t,\omega)}{t}\right\}\nonumber\\
    & = & \lim_{t\to\infty}\left\{\frac{t}{N(t,\omega)}\right\}\times\lim_{t\to\infty}\left\{ \frac{\mathcal{H}(t,\omega)}{t}\right\}\nonumber\\
    & = & \frac{1}{\lim_{t\to\infty}\left\{\frac{N(t,\omega)}{t}\right\}}\times\bar{\mathcal{H}}_t(\omega)\quad\quad [\mbox{substitute } (\ref{eq: holding cost per unit time})]\label{eq: theorem proof part1}
\end{eqnarray}
The remaining limit can be dealt with if the system is ergodic
\begin{eqnarray}
    \lim_{t\to\infty}\left\{  \frac{N(t,\omega)}{t}\right\} & = & \lim_{t\to\infty}\left\{  \frac{A(t,\omega)}{t}\right\} + \lim_{t\to\infty}\left\{  \frac{n(T_i,\omega)}{t}\right\}\nonumber\\
    & = & \frac{1}{\mathbb{E}\left[t^{\lambda}\right]}\\
    & = & \lambda \label{eq: theorem proof part2} 
\end{eqnarray}
where $\lambda$ is the $\omega$-independent time-average arrival rate\footnote{$\lambda$ should not be mistaken with the rate parameter of the Exponential distribution (commonly found in queuing theory. This paper assumes $t^{\lambda}$ to be generally distributed.}. By the ergodicity assumption, it is also the ensemble-average arrival rate which asserts the $\omega$-independence. Substituting (\ref{eq: theorem proof part2}) into (\ref{eq: theorem proof part1}) and removing $\lambda$ as a denominator yields
\begin{eqnarray}
    \bar{\mathcal{H}}_t(\omega) & = &  \lambda \bar{\mathcal{H}}_n(\omega)\\
    & = & \lambda \bar{\mathcal{R}}_n \quad\quad [\mbox{corollary }\ref{corol: Hn equals Tn}]\label{eq:rhs}
\end{eqnarray}
where the right-hand side of~(\ref{eq:rhs}) is $\omega$-independent. This implies the same for $\bar{\mathcal{H}}_t$ which proves the main result of this paper~(\ref{eq:main}).
\end{proof}

\newpage

\section{Questions and answers}

\paragraph{Question:} can theorem~\ref{theorem: main} be extended to other performance functions of interest?

\paragraph{Answer:} the well-known \emph{Little's Law} \cite{harchol2013performance} states that for ergodic systems
\begin{eqnarray}\label{eq: littles law}
    \bar{n}_t & = & \lambda \bar{\mathcal{R}}_t \times \frac{1}{c}
\end{eqnarray}
where $\bar{n}_t$ is the long-run average queue-length per unit time and $1/c$ removes the cost-coefficient. This allows for the following extensions
\begin{eqnarray}
    \bar{n}_t & = & \lambda \bar{\mathcal{H}}_t \times \frac{1}{c}\\
    & = & \lambda^2 \bar{\mathcal{H}}_n \times \frac{1}{c}\\
    & = & \lambda^2 \bar{\mathcal{R}}_n \times \frac{1}{c}\label{eq: response per customer littles law}
\end{eqnarray}
which are all equivalent under optimisation by constant product transform~(\ref{eq: constant product}). Optimising for $\bar{n}_t$ is useful when working with queuing systems where space constraints become of importance. Furthermore, such relationships can be exploited as a variance-reduction technique in the analysis of simulation output. \emph{Indirect estimation} \cite{glynn_1988_queue_simulation} computes a statistic of interest using a relationship such as (\ref{eq: response per customer littles law}) or Little's Law~(\ref{eq: littles law}) where one of the statistics, such as $\lambda$, is known with certainty or has been estimated before hand.

\paragraph{Question:} why is equation~(\ref{eq: useful approximation}) a useful approximation?

\paragraph{Answer:} in practice, the simulation time\footnote{Often called a simulation budget\cite{glynn_1988_queue_simulation}.} used for evaluating infinite-horizon cost functions is finite $T^*$. This is common when performing Monte-Carlo Rollouts in Reinforcement Learning \cite{suttonRLbook}. Equation~(\ref{eq: useful approximation}) allows for the actual response time to be approximated by a more computationally friendly and easier-to-code substitute. The use of holding costs means that a single sweep of integration can be performed over the simulated trajectory whereby a constant $\mathbb{E}[\mathcal{R}^{un}]$ is added. The integration reduces to the summation of $n(t_i)\times (t_{i+1}-t_i)$ over successive arrival and service completions where $t_i$ denotes the time-stamp of an event occurrence. This is much easier than tracking each individual customer and summing up its response time. Furthermore, a simulator might not disclose per customer information or give access to the unobserved response time in which case this approximation is the only reasonable option.

\paragraph{Question:} why has the dependence on a sample-path been stressed throughout?

\paragraph{Answer:} in applications such as Monte-Carlo Rollout \cite{suttonRLbook}, a single run is ideally performed to obtain an expected value. An expected value is independent of $\omega$. Hence, the presence of $\omega$ has been used to show when a single run is not necessarily sufficient to be used as an expected value. Note that the dependence of $\omega$ has vanished throughout this paper when a system was shown to be ergodic.

\paragraph{Question:} are there less strict versions of equivalence under optimisation?

\paragraph{Answer:} substituting a difficult objective function by a simpler one for optimisation purposes is common in the literature. In optimising \emph{parameterised} policies of discrete event dynamic systems (DEDS), the original performance metric is replaced by the same performance metric for a \emph{Stochastic Hybrid System} (SHS) \cite{cassandras2010_SHS} that is an abstraction of the parent DEDS. These SHS metrics can be very easy to compute. This simplicity is demonstrated in \cite{howell_2005}. Furthermore, while computing the performance metric from the sample path, the gradient of the performance metric can also be obtained through a \emph{Perturbation Analysis} estimator. The SHS substitute generally does not match the original objective function and is not equivalent under optimisation by the definition presented in this paper. However, these substitutes exhibit the fortunate outcome that the two performance metrics have extreme points found at reasonably the same parameter locations. For examples of this see page 698 of \cite{cassandras_book} or chapter 2.4.1 of this PhD thesis \cite{howell2006_thesis}. Common SHS approximations for queuing systems are \emph{Stochastic Fluid Models} (SFM) as introduced in chapter 11.9.1 of \cite{cassandras_book}. For objective functions that are very expensive to evaluate, a cheaper version is built using \emph{response surfaces} or \emph{surrogate models} \cite{forrester_surrogate_modelling_book}. The surrogate models are fit to limited input-output data\footnote{This is referred to as supervised learning in the machine learning literature.} obtained from the original model via a well chosen sampling plan. These surrogates are not strictly equivalent under optimisation but may share roughly the same critical points. The use of \emph{Gaussian Processes} as surrogate models allow for the approximated objective function to near equivalence under optimisation around the best extreme in the sampled range through sequential infill search. This is often referred to as \emph{Bayesian Optimisation}.

\paragraph{Question:} only the expected values for $\tau^{\phi}$ and $r^{\phi}$ were given in (\ref{eq: age inspection paradox}). Can probability density functions (p.d.f.) be derived for these? 

\paragraph{Answer:} the expected values in (\ref{eq: age inspection paradox}) were derived using renewal-reward theory (see section 23.4 of \cite{harchol2013performance}). Expected values can also be obtained through their density functions
\begin{eqnarray}
    f_{r^\phi}(t) & = & \frac{1-F_{t^\mu}(t)}{\mathbb{E}\left[t^\mu\right]}\\
    & = & \frac{\bar{F}_{t^\mu}(t)}{\mathbb{E}\left[t^\mu\right]}\\
    & = & \mu \bar{F}_{t^\mu}(t)\\
    & = & f_{\tau^\phi}(t)
\end{eqnarray}
where $\mu$ is the expected long-term service rate. These are derived in the \hyperref[appendix]{appendix}.
\paragraph{Question:} can the analysis of this paper be extended?

\paragraph{Answer:} this analysis should be repeated for the long-term discounted case. Little's Law has an extension
\begin{equation}
    \bar{n}_t^B = \lambda \bar{\mathcal{D}}_t
\end{equation}
where $\bar{n}_t^B$ is the average length of the buffer per unit time and $\bar{\mathcal{D}}_t$ is the average delay per unit time. Such performance is not interested in the customer receiving service. It would be useful to repeat the analysis of this paper using delay instead of response time. Moreover, establishing a relationship between the results for delay and response would allow for a complete framework.

\section{Examples}

A simulation experiment is first considered as to support the findings of section~\ref{section: relationship}. The second examples shows how these relationships allow for a Markov Decision Process to be formulated as to optimise long-run average response time per customer.

\subsection{Optimal service rate of a G/G/1 queue}
A company has a dedicated server to processing large computations that would otherwise not be suitable for the standard desktop computer. Only one job can be processed at a time due to all cores being devoted to the computations (this would allow for multiprocessing of the single job). The rest of the unserved jobs are placed in a buffer. The company has not decided on how to prioritise service in this buffer. Hence, FCFS has not been committed to as of yet. Service order is thus random. Inter-arrival times between jobs is generally distributed. However, it has been determined that jobs arrive at a rate of $\lambda$. Jobs are of varying size which means that the time taken to complete jobs follows a general distribution. This distribution has its mean service rate $\mu(\pi)$ managed by the company's policy $\pi$.

The service rate can tuned within the range $[0.95,0.125]$. A faster service rate means that jobs spend less time waiting. This translates to higher productivity and subsequent better revenue. However, higher service rates wear out the server and incurs a cost. This penalty has a per-unit-time cost rate of of the form $k_0\exp\left(-k_1 \mu(\pi)\right)$ where $k_0>0$ and $k_1>0$. High service rates are seen to be harmful to the server. The company would like to select a policy that minimises various types of long-run operational costs that include this penalty.

The company runs simulations under different settings of $\mu(\pi)$ while it tracks $\mathcal{H}(t)$, $\mathcal{R}^{obs}$ and $\mathcal{R}^{act}$. Based on historical data, the mean service rate is chosen to be $\lambda = 0.1$. Long-run simulations produce the following response surfaces as found in figure~\ref{fig:response surface simulation}.

\begin{figure}[ht]
    \centering
    \includegraphics[width=0.7\textwidth]{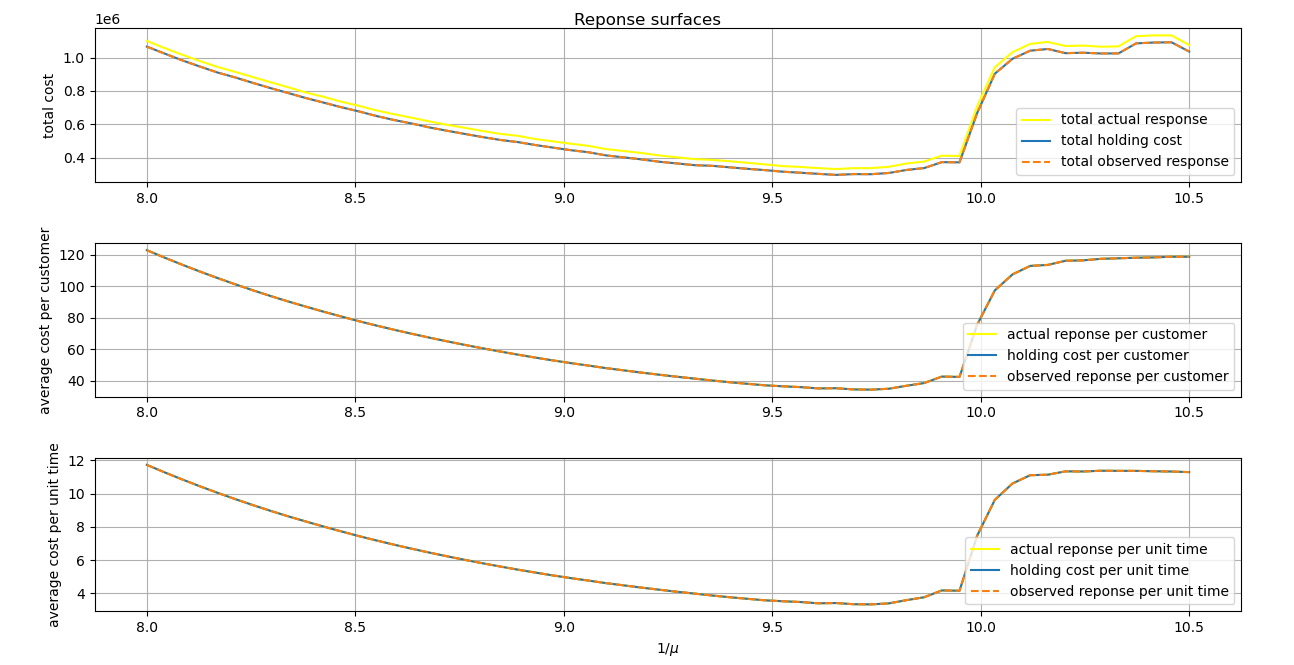}
    \caption{Response surfaces of $G/G/1$ queue with $\lambda=0.1$.}
    \label{fig:response surface simulation}
\end{figure}
The outcome is that all cost functions are equivalent under optimisation with $\mu\left(\pi^*\right) = 1/9.65$. 

\subsection{MDP service rate policy of a M/M/1 queue}\label{example: MDP}

The same company would like to obtain an optimal service rate by solving the problem as a Markov Decision Process (MDP) through the use of the \emph{Average-cost Bellman Equations} \cite{BertsekasVol2,puterman2014markov}
\begin{equation}
    J_\pi(x) = \bar{\rho}_\pi + c(x,\pi(x)) + \sum_{y\in \mathcal{X}} P(y\mid x,\pi(x))J_{\pi}(y)\label{eq: bellman}
\end{equation}
where $J_\pi(x)$ is a \emph{relative value function} over state $x$ using policy $\pi$ and forms part of the value function vector $\vec{J}_{\pi} = \left[J_{\pi}(x): x\in\mathcal{X}\right]$. The functions are relative in the sense that it is only the difference between them that matters in determining $\bar{\rho}_\pi$ and the optimal policy $\pi^*$. In practice, a state $\tilde{x}\in\mathcal{X}$ is chosen to be the \emph{distinguished state} such that $J_{\pi}(\tilde{x})=0$ which allows $n-1$ value functions and $\bar{\rho}_\pi$ to be solved for using $n$ Bellman equations~(\ref{eq: bellman}). Moreover, $c(x,\pi(x))$ is the cost incurred when in state $x$, $P(y\mid x,\pi(x))$ is a transition probability model and $\bar{\rho}_\pi$ is the average cost associated with $\pi$. Equation~(\ref{eq: bellman}) needs to be solved\footnote{Solving the Bellman equations for a fixed $\pi$ is called \emph{policy evaluation}.} for $J_\pi(x)$ and $\bar{\rho}_\pi$ and optimised\footnote{Greedy optimisation of $\pi$ is called \emph{policy improvement}.} for over $\pi$. This is achieved through Policy-Iteration, Value-Iteration or Linear Programming \cite{puterman2014markov}. This paper will not solve for the optimal policy. Instead focus will be on the issue of formulating an MDP, when solved for, would minimise $\bar{\mathcal{R}}_n$.

The company approximates the general service and arrival distributions with Exponential distributions that have rate parameters $\mu_{\pi(x)}$ and $\lambda$, respectively. The service rate parameters is to be chosen from a finite size interval grid $\mu_{\pi(x)} \in \left[\underline{\mu},\overline{\mu}\right]$

If $x\in \mathbb{N}_{0}$ is assumed to be the length of the queue, a transition model can be obtained through \emph{uniformisation} \cite{puterman2014markov,BertsekasVol2,cassandras_book}
\begin{equation}\label{eq:tpm}
    P(y\mid x,\pi(x)) = \begin{cases}
    \frac{\mu_\pi(x))}{\overline{\mu} + \lambda}, & y = x - 1\\
    \frac{\lambda}{\overline{\mu} + \lambda}, & y = x+1\\
    1 - \frac{\lambda+\mu_\pi(x))}{\overline{\mu} + \lambda}, & y = x
    \end{cases}
\end{equation}
along with per-stage-cost $c(x,\pi(x)) = cx/(\lambda+\overline{\mu})$. Such a MDP optimises $\bar{\mathcal{H}}_t$ and not $\bar{\mathcal{R}}_n$. A convenient transition model such as (\ref{eq:tpm}) that exhibits the Markov property is not possible for modelling response time. Response-time requires tracking each customer's time spent in the queue. This makes it history-dependent and not Markovian. However, theorem~\ref{theorem: main} shows $\bar{\mathcal{H}}_t$ and $\bar{\mathcal{R}}_n$ to be equivalent under optimisation such that solving for
\begin{equation}\label{eq: bellman holding cost}
    J_\pi(x) = \frac{\bar{\mathcal{H}}_t^\pi}{\lambda+\overline{\mu}} + \frac{cx}{\lambda+\overline{\mu}} + \sum_{y\in \mathcal{X}} P(y\mid x,\pi(x))J_\pi(y)
\end{equation}
would yield an optimal policy $\pi^*$ that also optimises $\bar{\mathcal{R}}_n$. After solving (\ref{eq: bellman holding cost}) for $\pi^*$ and $\bar{\mathcal{H}}_t^{\pi^*}$ then $\bar{\mathcal{R}}_t^{\pi^*} = \lambda \bar{\mathcal{H}}_t^{\pi^*}$.

\section{Conclusion}\label{section:conclusion}

The main objective of this paper was to show that $\bar{\mathcal{H}}_t$ and $\bar{\mathcal{R}}_n$ are equivalent under optimisation by a constant product relationship $\bar{\mathcal{H}}_t = \lambda \bar{\mathcal{R}}_n$ when the system of interest is ergodic. This objective was achieved without placing any restriction on the service order of customers. The probability distributions of the arrival and service durations were also kept as general as possible; only requiring that their domain be non-negative real numbers $\mathbb{R}_{\geq 0}$. The motivation behind this objective was provided by arguing $\bar{\mathcal{R}}_n$ as a more intuitive or desirable performance metric to optimise than the more practical $\bar{\mathcal{H}}_t$. This was illustrated in the MDP example of section~\ref{example: MDP}. Furthermore, such a relationship was discussed to have applications in variance reduction of simulation output. The work of this paper can be extended to establishing a similar relationship between the \emph{discounted} versions of $\mathcal{H}$ and $\mathcal{R}$ as well as a relationship between delay $\mathcal{D}$, holding cost of the buffer and buffer length. Lastly, while a $G/G/1$ queue was mostly discussed it should be emphasised that these results can be used in more complex models such as scheduling of polling systems or in routing to parallel queues.

\newpage

\section*{Appendix}\label{appendix}

The following derivation of observed residual life-times $r^\phi$ and observed process age $\tau^\phi$ is based on section 14.4 of~\cite{stewart2009probability}.
\\
Before proceeding with $r^\phi$ and $\tau^\phi$, the p.d.f. for $t^\phi$ is required. Its derivation starts by reasoning about $P\left(t\leq t^\phi \leq t + \Delta t\right)$ in that
\begin{eqnarray}
    P\left(t\leq t^\phi \leq t + \Delta t\right) & \propto & \mbox{length of the service interval}\\
    & \propto & t
\end{eqnarray}
as well as
\begin{eqnarray}
    P\left(t\leq t^\phi \leq t + \Delta t\right) & \propto & \mbox{frequency of service intervals with length }t\\
    & \propto & f_{t^\mu}(t)\,dt.
\end{eqnarray}
Furthermore, $\left(t\leq t^\phi \leq t + \Delta t\right) = f_{t^\phi}(t)\,dt$ which results in
\begin{eqnarray}
  f_{t^\phi}(t)\,dt & \propto & t \,  f_{t^\mu}(t)\,dt\\
  & = & \eta\, t\,  f_{t^\mu}(t)\,dt
\end{eqnarray}
where 
\begin{eqnarray}
  \eta & = & \left( \int_{0}^\infty t\, f_{t^\mu}(t)\,dt   \right)^{-1}\\
  & = & \frac{1}{\mathbb{E}\left[t^\mu\right]}\\
  & = & \mu
\end{eqnarray}
such that the desired p.d.f. is obtained
\begin{eqnarray}
    f_{r^\phi}(t) & = & \frac{t\times f_{t^\mu}(t)}{\mathbb{E}\left[t^\mu\right]}\\
    & = & \mu t \times f_{t^\mu}(t).
\end{eqnarray}
If a service duration $t^\phi$ has been observed/inspected then the service age $\tau^\phi$ at observation $T$ should be \emph{uniformly} distributed as $\tau^\phi \in [0,t^\phi]$. It can be reasoned that $P(\tau^\phi \leq t \mid t^\phi) = t/t^\phi = F_{\tau^\phi}(t\mid t^\phi)$. The derivative is taken with respect to~$t$.
\begin{eqnarray}
\frac{dF_{\tau^\phi}(t|t^\phi)}{dt} & =&  \frac{1}{t^\phi}\\
\therefore dF_{\tau^\phi}(t|t^\phi) &= &  \frac{dt}{t^\phi} \\
& =&  P(t \leq \tau^\phi \leq t + dt| t \leq t^\phi) \\
& =&  f_{\tau^\phi}(t|t^\phi) dt
\end{eqnarray}
Interest does not lie in the conditional. Hence $t^\phi$ needs to be marginalised out.
\begin{eqnarray}
P(t \leq \tau^\phi \leq t + dt) & =&  f_{\tau^\phi}(t) dt\\
& = & \int_{0}^{\infty} f_{\tau^\phi}(t|t^\phi)\,dt \, f_{t^\phi}(\phi) \mathbbm{1}_{\{t\leq \phi\}} \,d\phi\\
& =&  \int_{t}^{\infty} \frac{dt}{t^\phi} f_{t^\phi}(\phi) d\phi\\
& =&  \int_{t}^{\infty} \frac{dt}{\phi} \frac{\phi f_{t^\mu}(\phi)}{\mathbb{E}[t^\mu]} d\phi \\
& =&  \frac{1}{\mathbb{E}[\Delta]}\left( 1 - \int_{0}^{t} f_{t^\mu}(\phi)\,d\phi\,dt  \right) \\
& =&  \frac{1 - F_{t^\mu}(t)}{\mathbb{E}[t^\mu]}dt\\
& =&  \frac{\bar{F}_{t^\mu}(t)}{\mathbb{E}[t^\mu]}dt\\
\therefore f_{\tau^\phi}(t)&=&  \frac{\bar{F}_{t^\mu}(t)}{\mathbb{E}[t^\mu]}\\
& = & \mu \bar{F}_{t^\mu}(t)
\end{eqnarray}
By noticing that $r^\phi\sim\mbox{Uniform}(0,t^\phi)$ then the same derivation follows. Repeating this would be trivial. Instead, $\forall t \in \mathbb{R}_{\geq0}: f_{\tau^\phi}(t) = f_{r^\phi}(t)$ confirms that $\mathbb{E}\left[\tau^\phi\right] = \mathbb{E}\left[r^\phi\right]$ which agrees with the renewal-reward result from section~\ref{section: inspection paradox}.

\newpage
\medskip
\printbibliography

\end{document}